\documentclass[10pt,twocolumn,twoside]{IEEEtran}
\usepackage{amssymb,epsfig,xcolor,cite,amsmath,amsfonts,mathrsfs,algorithm}
\usepackage{epstopdf}
\usepackage{datetime,fancyhdr}
\usepackage{algpseudocode}
\usepackage{arydshln}
\usepackage{multirow}
\usepackage{amsthm} 
\allowdisplaybreaks

\usepackage{times} 
\usepackage{epsfig}
\usepackage{graphicx}
\usepackage{latexsym}
\usepackage{subfigure}

\newtheorem{lemma}{Lemma}[section]
\newtheorem{theorem}{Theorem}[section]

\newtheorem{remark}{Remark}[section]

\newtheorem{proposition}{Proposition}[section]
\newtheorem{assumption}{Assumption}[section]
\definecolor{purple}{RGB}{128,0,128} 
\ifCLASSINFOpdf
\else
\fi

\begin{document}

\title{Projection-Free Bandit Online Optimization for Multi-Agent Systems with Dynamic Regret }

\author{Xia Jiang, 
    Lu Liu,~\IEEEmembership{Senior Member,~IEEE,} and Gang Feng,~\IEEEmembership{Fellow,~IEEE}
\thanks{This work was supported by the Research Grants Council of Hong Kong under Grant CityU-11208223. This paper was not presented at any conference. Corresponding author is Lu Liu.}
\thanks{X. Jiang is with the School of Electrical and Electronic Engineering, Nanyang Technological University, Singapore 639798, Singapore (email: xia.jiang@ntu.edu.sg).}
\thanks{L. Liu and G. Feng are with the Department of Mechanical Engineering, City University of Hong Kong, Hong Kong, China (email: luliu45@cityu.edu.hk, megfeng@cityu.edu.hk).}
}

\maketitle

\begin{abstract}
 This paper investigates distributed online optimization for multi-agent dynamical systems with constrained inputs and time-varying cost functions. While online convex optimization offers a principal framework for sequential decision-making, existing online learning and optimization algorithms typically require accurate system models, limiting their applicability in practical settings. To overcome this challenge, we propose a distributed bandit online feedback optimization algorithm that relies solely on real-time input–output data. The algorithm employs a smoothing zeroth-order one-point estimator to construct local gradient approximations directly from cost evaluations. Additionally, to enforce input constraints effectively, we integrate a projection-free conditional gradient update, making the algorithm well-suited for online and large-scale settings. Furthermore, we establish a sublinear dynamic regret bound that depends on a temporal variation measure of system non-stationarity. Finally, numerical simulations demonstrate the effectiveness of the proposed algorithm.
\end{abstract}

\begin{IEEEkeywords}
 bandit feedback, projection-free method, online optimization,  distributed optimization, dynamic regret
\end{IEEEkeywords}

%
\IEEEpeerreviewmaketitle

\section{Introduction}
\par The framework of online convex optimization (OCO) serves as a fundamental tool for modeling sequential decision processes in uncertain environments, offering systematic tools for handling unknown and time-varying cost functions. In the classical OCO setting, a learner makes decisions in a step-by-step manner while facing time-varying and initially unknown cost functions. The quality of these decisions is typically assessed through regret, which measures the performance gap between the learner’s accumulated loss and that of the best fixed decision chosen in hindsight. This paradigm is of significant importance in many practical control applications such as robot formation control \cite{robot_form}, target tracking \cite{onl_mirror_des} and energy flow management \cite{Yuan_online_grids}. Moreover, in many large-scale networked systems, such as smart grids \cite{YI2016259} and intelligent transport \cite{ilte_transp}, the information of the concerned systems is geographically distributed across separated physical nodes, necessitating scalable distributed online optimization (DOO) algorithms. In such multi-agent networks, each node operates as an autonomous decision-making agent that updates its actions based on local information and communicates with its neighbors to collectively optimize the global performance under time-varying objectives and environmental disturbances.
\par Motivated by the need for real-time decision-making in networked systems, considerable efforts have been devoted to developing DOO algorithms, leading to the distributed online dual averaging algorithm \cite{online_dual_ave,di_online_dual_ave}, distributed online primal-dual algorithms \cite{Yi_tac_25,LU2023111203,li_online_cst}, and distributed online gradient tracking algorithms \cite{dis_online_aggre_Va} within the OCO framework. Despite this significant progress, most existing DOO methods overlook the dynamical constraints inherent in physical systems. Consequently, a fundamental gap arises when bridging distributed online learning theory with the real-time operation of complex cyber-physical systems. While some studies have successfully incorporated OCO methodologies into the control of linear dynamical systems, addressing challenges such as external disturbance \cite{pmlr-v139-agarwal21b}, inequality constraints \cite{YU2024111407}, and output feedback \cite{NEURIPS2022_30dfe47a,CARNEVALE2024}, these works critically rely on the availability of accurate model knowledge. In practical scenarios, however, obtaining precise models of the concerned systems is often infeasible due to their inherent nonlinearities, unmodeled dynamics, and environmental variations. 
\par To reduce the dependence on accurate system models, data-driven control has emerged as a promising model-free alternative. Recent works have proposed resilient data-driven model predictive controllers that are robust to measurement noise \cite{data_mpc} and denial-of-service attacks \cite{Liu2022data}, building on behavioral systems theory and historical input–output data. Nevertheless, such approaches typically require persistently exciting input–output data and are restricted to linear time-invariant (LTI) systems. These methods generally assume constant cost functions, treating temporal variations merely as disturbances rather than as part of the optimization objective. In parallel, feedback-based sensitivity learning methods \cite{qin_feed, ES_alg, YANG2023110923}, such as extremum seeking, estimate gradients through perturbations but focus mainly on steady-state optimality. While classical extremum seeking and recent stochastic derivative-free methods \cite{ES_alg, YANG2023110923} have explored stochastic perturbations and averaging to handle time-varying objectives, they primarily offer asymptotic convergence for static problems. Their transient learning performance under non-stationary environments, typically evaluated via dynamic regret, has received very limited attention. Furthermore, extending these methods to general nonlinear systems remains difficult.
\par These observations reveal a remaining gap in the development of effective algorithms capable of achieving online optimization for nonlinear dynamical systems with time-varying cost functions by using only input-output data while with no knowledge of the system model. Recent advancements \cite{review_zerograd,TANG2023110741,mod_fre_feed} in data-driven optimization with bandit feedback provide a useful foundation, showing that zeroth-order gradient-free methods can estimate descent directions from cost evaluations by exploiting system outputs as surrogates for analytical gradients. However, most approaches are tailored for centralized settings and restricted to steady-state optimization with constant cost functions. While some recent efforts have explored bandit feedback for distributed online settings, such as in aggregative games \cite{dis_non_resi_bandit} and event-triggered optimization under constraints \cite{dis_bandit_xinli}, applying these approaches broadly to DOO still faces theoretical and computational challenges. Theoretically, relying solely on limited local observations induces gradient estimation variance \cite{JIANG2026112999}, which couples tightly with network consensus drift in time-varying environments. Additionally, standard projection-based methods suffer from computational burdens under complex constraints, making them impractical for real-time systems. 
\par To bridge this gap, this paper proposes a novel distributed bandit online learning framework for multi-agent systems with input constraints. Specifically, we develop a projection-free distributed feedback optimization algorithm that integrates a smoothing one-point gradient estimator with a network-level conditional gradient update. This framework addresses the coupled challenges of unknown system dynamics and non-stationary environments. 
The main contributions of this work are summarized as follows.
 \begin{itemize}
    \item We establish a model-free distributed online optimization framework for multi-agent systems with unknown nonlinear dynamics. Unlike standard OCO with explicit analytical models, our formulation directly exploits system-induced objective functions via real-time input-output measurements, thereby bridging distributed online learning theory with the real-time operation of complex cyber-physical systems without requiring explicit model knowledge.
     \item We develop the distributed projection-free feedback optimization algorithm (DPFOA) that overcomes the coupled challenges of zeroth-order estimation variance and network consensus drift. Rather than a heuristic combination of existing methods, our approach embeds a smoothing one-point gradient estimator into a refined conditional gradient step. This structural design simultaneously bypasses computationally expensive projection operations and effectively suppresses the propagation of gradient approximation errors across the network. 
      \item We establish a rigorous sublinear dynamic regret bound that explicitly captures the impact of temporal variation in system non-stationarity. The theoretical breakthrough lies in decoupling and bounding the intertwined effects of bandit estimation errors, gradient tracking consensus mismatch, and dynamic temporal variations. This analysis provides the theoretical performance guarantees for bandit online control and explicitly quantifies how underlying system non-stationarity scales the regret bound in distributed zeroth-order settings.
 \end{itemize}
 \par The remainder of the paper is organized as follows. 
Section \ref{solver_design} introduces the problem formulation and presents the distributed online feedback optimization algorithm. Theoretical guarantees on its regret performance are provided in Section \ref{proof_sec}. Numerical experiments demonstrating the effectiveness of the proposed approach are presented in Section \ref{simulation}, followed by concluding remarks in Section \ref{conclusion}.  
\par \textbf{Mathematical Notations:}
Throughout this work, let $\mathbb{R}$ denote the set of real numbers and $\mathbb{R}^n$ the $n$-dimensional Euclidean space. For a positive integer $n$, we use $[n] := {1, \dots, n}$. The transpose of a vector $v$ is denoted by $v^\top$, and $\langle \cdot, \cdot \rangle$ represents the standard inner product. The symbol $\nabla f$ denotes the gradient of a differentiable function $f$. 
\par The communication structure among agents is described by an undirected graph $\mathcal{G}=(\mathcal{V},\mathcal{E},A)$, where $\mathcal{V}=\{1,\cdots,n\}$ is the node set and $\mathcal{E} \subseteq \mathcal{V} \times \mathcal{V}$ is the edge set. Let $A = [A_{ij}]$ denote the associated adjacent matrix, where $A_{ij} > 0$ if agents $i$ and $j$ are connected, and $A_{ij} = 0$ otherwise. For each agent $i$, its neighbor set is defined as $\mathcal{N}_i = \{ j \in \mathcal{V} : (i,j) \in \mathcal{E} \}$.
\section{Problem description and algorithm design}\label{solver_design}

\par This paper investigates an online optimization problem for a multi-agent system comprising $n$ heterogeneous agents. Each agent, say agent $i$, is associated with a stable dynamical sytem that is abstracted by its nonlinear steady-state input-output map $h_i:\mathbb{R}^p\times \mathbb{R}^r\to \mathbb{R}^q$ described as follows:
\begin{align}\label{y_plat_dy}
    y_i=h_i(u_i,d),
\end{align}
where $u_i\in \mathbb{R}^{p}$ is the input, $y_i\in \mathbb{R}^q$ is the output, and $d\in \mathbb{R}^r$ is an unknown but bounded exogenous disturbance. Throughout this work, we do not assume any explicit knowledge of the steady-state map $h_i$.
\par Within the multi-agent network, the agents aim to cooperatively solve a global online optimization problem. Assume that agent $i$ has access to a local, time-varying cost function $\Psi_{i,t}(u_i,y_i)$, which depends on its input $u_{i}$ and the corresponding output $y_{i}$. At every time step $t$, agent $i$ computes its control input $u_{i,t}$ in a decentralized manner and applies it to the underlying dynamical system. Then, the current cost functions $\Psi_{i,t}$ is revealed to the agent. Let $u_t\triangleq [u_{1,t}, \cdots, u_{n,t}]\in\mathbb{R}^{p\times n}$ and $y_t\triangleq [y_{1,t}, \cdots, y_{n,t}]\in \mathbb{R}^{q\times n}$ denote the stacked vectors of local inputs and outputs, respectively.  
The global online optimization problem is formulated as follows:
\begin{align}\label{opti_prob}
    \min_{u,y} \Psi_t(u,y)&=\sum_{i=1}^n \Psi_{i,t}(u_i,y_i)\notag\\
    \text{s.t.} \quad y&=h(u,d), u_i\in \Omega,
\end{align}
where $\Omega \subseteq  \mathbb{R}^{p}$ is the input constraint set and $h=[h_1,\cdots,h_n]$ is the aggregate steady-state map.
By eliminating $y$ via the steady-state relation, problem \eqref{opti_prob} can be equivalently reformulated as an unconstrained optimization problem,
\begin{align}
    \min_u \tilde{\Psi}_t(u)=\sum_{i=1}^n \tilde{\Psi}_{i,t}(u), \quad \text{s.t.} \ u\in \Omega,
\end{align}
where $\tilde{\Psi}_{i,t}(u)=\Psi_{i,t}(u_i,h_i(u_i,d))$ is the reformulated objective function after substituting the steady-state map.  
\par Most existing numerical optimization solvers require an explicit analytical model of the steady-state map $h$ as well as the exact value of the disturbance $d$. However, such information is often difficult or impossible to obtain in the presence of complex system dynamics and unknown external perturbations. To address these challenges, we develop a feedback-based optimization controller that solely utilizes real-time output measurements to iteratively drive system \eqref{y_plat_dy} to an operating point that minimizes the global objective function \eqref{opti_prob}.
\par We make the following assumptions throughout this paper.
\begin{assumption}\label{func_assum}
    The function $\Psi_{i,t}$ is $G$-Lipschitz over the compact set $\Omega$, i.e. for $u_1,u_2\in \Omega$ and $y_1,y_2\in \mathbb{R}^q$,
    $$|\Psi_{i,t}(u_1,y_1)-\Psi_{i,t}(u_2,y_2)|\leq G\|u_1-u_2\|+G\|y_1-y_2\|.$$ 
\end{assumption}
\begin{assumption}\label{psi_tilde_del}
    The function $\tilde{\Psi}_{i,t}$ is $M$-Lipschitz continuous and convex. 
\end{assumption}
\begin{assumption}\label{con_set_ass}
    The constraint set $\Omega$ is convex and compact with diameter $D$, i.e., $\|u_1-u_2\|\leq D$ for any $u_1, u_2\in \Omega$.
\end{assumption}
\begin{assumption}\label{net_assum}
	The multi-agent network is connected and the associated adjacent matrix $A=[A_{ij}]$ is symmetric and doubly stochastic with non-negative elements. 
\end{assumption}
{
\begin{assumption} \label{ass:temporal_variation}
There exists a constant $V > 0$ such that for $i \in \mathcal{V}$ and $t \ge 1$, the time-varying cost function $\Psi_{i,t}$ satisfies
$$|\Psi_{i,t}(u, y) - \Psi_{i,t-1}(u, y)| \le V,$$
for any feasible input $u \in \Omega$ and output $y \in \mathbb{R}^q$.
\end{assumption}
\begin{remark}
Assumptions \ref{func_assum}--\ref{net_assum} are standard in distributed online optimization and multi-agent systems \cite{dsg_nedic,pmlr-v70-zhang17g,Duchi_tac,Extra_shi}. Assumption \ref{func_assum} imposes a Lipschitz condition on the stage cost, while Assumption \ref{psi_tilde_del} requires the reduced objective $\tilde{\Psi}_{i,t}(u_i)$ to be convex and Lipschitz, which ensures that the cost remains well behaved after eliminating the output through the steady-state map. Assumptions \ref{con_set_ass} and \ref{net_assum} further guarantee that the feasible set is compact and that the network can reach average consensus. Assumption \ref{ass:temporal_variation} additionally limits the variation of the cost across consecutive time steps, which is standard in dynamic OCO and prevents abrupt changes in the environment or tracking target. Together, these assumptions provide the regularity and boundedness for the convergence analysis.
\end{remark}

}
\begin{remark}
    To streamline the exposition, this article focuses on nonlinear systems that can be abstracted by steady-state input-output maps. For instance, the dynamical model of agent $i$ can be given as
\begin{align}\label{plant_dynamic}
    x_{i,t+1}=&f_i(x_{i,t},u_{i,t},d),\\
    y_{i,t}=&g_i(x_{i,t},d).
\end{align}
where $u_{i,t}$, $x_{i,t}$, and $y_{i,t}$ denote the input, state and output variables of agent $i$, respectively. The functions $f_i$ and $g_i$ can be different across agents. If the dynamical system \eqref{plant_dynamic} admits a unique, globally exponentially stable steady-state map $x_{i,\mathrm{ss}}(u, d)$, then the corresponding steady-state output satisfies $y_{i}=g_i(x_{i,ss}(u_i,d),d)\triangleq h_i(u_i,d)$, thus recovering the steady-state map representation  \cite{mod_fre_feed}. 
\end{remark}

\begin{algorithm}
\caption{ Distributed projection-free feedback optimization algorithm (DPFOA) for each agent $i\in [n]$ }
	\label{dis_grad_free_alg}
	\begin{algorithmic}[1] 
        \State Initialize: stepsizes $\rho$, $\eta$, $\delta$ and $\sigma_t$, $u_{i,0}, w_{i,0}\in \Omega$, $d_{i,1}=0$, $a_{i,0}=0$, the number of epochs $T$.
        \For {$t=1,\cdots,T$}
        \State  \textit{One-point gradient estimator:}  
        \begin{align}
            g_{i,t}=&\frac{pv_{i,t}}{\delta} \left(\Psi_{i,t}(u_{i,t},y_{i,t})-\Psi_{i,t-1}(u_{i,t-1},y_{i,t-1})\right)\label{git_up}\\
            a_{i,t}=&(1-\rho)a_{i,t-1}+\rho g_{i,t}\label{ait_up}
        \end{align}
        \State \textit{Global gradient estimator:}
        \begin{align}\label{dit_up_step}
            d_{i,t+1}=\sum_{j\in \mathcal{N}_i}A_{ij}d_{j,t}+a_{i,t}
        \end{align}
        
        \State \textit{Projection-free conditional gradient step:}
        \begin{align}
            F_{i,t}(w)=&\eta\langle d_{i,t},w \rangle+\|w\|^2 \label{Fit_up}\\
            s_{i,t}=&\text{argmin}_{s\in \Omega}\langle \nabla F_{i,t}(w_{i,t}),s\rangle  \label{sit_up}\\
            w_{i,t+1}=&w_{i,t}+\sigma_t(s_{i,t}-w_{i,t}) \label{wit_up}
        \end{align}
        \State \textit{Input exploration step:} 
        \begin{align}\label{input_ex_up}
            u_{i,t+1}=w_{i,t+1}+\delta v_{i,t+1}
        \end{align}
        \EndFor
	\end{algorithmic}
\end{algorithm}

\par To solve the online multi-agent optimization problem \eqref{opti_prob}, we design a distributed projection-free feedback optimization algorithm (DPFOA) for each agent $i$ summarized in Algorithm \ref{dis_grad_free_alg}. 
 In DPFOA, we adopt a zeroth-order one-point gradient estimator to approximate local gradients, eliminating the need for explicit gradient computations. This gradient estimator relies solely on  real-time output measurements and thus does not require explicit knowledge of the underlying system model. Moreover, a smoothing step in \eqref{ait_up} is introduced to improve the accuracy of the local gradient estimates. Based on these local gradient approximations, each agent constructs a global gradient estimator by using a weighted combination of its historical local gradients and the information received from its neighbors. To handle the input constraints, we apply a projection-free conditional gradient update with the global gradient estimator $d_{i,t}$. The next control input $u_{i,t+1}$ is then generated by adding an exploration perturbation $\delta v_{i,t+1}$ to the candidate solution $w_{i,t+1}$, and the resulting signal is applied to the system \eqref{plant_dynamic}. 
\par The presence of the constraint set $\Omega$ makes standard Gaussian random vectors $v_{i,t}$ unsuitable for our setting, as the unbounded support of the Gaussian distribution may drive the perturbed input $u_{i,t}=w_{i,t}+\delta v_{i,t}$ outside $\Omega$. To avoid this issue, we sample $v_{i,t}$ uniformly from the unit sphere $\mathbb{S}_{p-1}\triangleq \{v_{i,t}\in \mathbb{R}^p:\|v_{i,t}\|=1\}$, which has been adopted in zeroth-order optimization methods \cite{two_feed,Chen2020FrankWolfe,mod_fre_feed}. In this setting, the smoothed approximation $\tilde{\Psi}_{i,t,\delta}(w)$ for the objective function $\tilde{\Psi}_{i,t}(w)$ is defined as
\begin{align}
    \tilde{\Psi}_{i,t,\delta}(w)=\mathbb{E}_{v' \sim U(\mathbb{B}_p)}[\tilde{\Psi}_{i,t}(w+\delta v')]
\end{align}
where $w\in \mathbb{R}^p$, and $U(\mathbb{B}_p)$ is the uniform distribution over the closed unit ball $\mathbb{B}_p=\{v'\in \mathbb{R}^p|\|v'\|\leq 1\}$ in $\mathbb{R}^p$. The properties of the smoothed approximation $\tilde{\Psi}_{\delta}(w)$ are summarized as follows.
\begin{lemma}\cite{hazan2016introduction}\label{f_del_dif_lem}
     Suppose that $\tilde{\Psi}: \mathbb{R}^p \rightarrow \mathbb{R}$ is $M$-Lipschitz continuous. Then, for any $w \in \mathbb{R}^p$ and $\delta>0$, the smoothed function $\tilde{\Psi}_\delta(w)$ satisfies
\begin{align}
\mathbb{E}_{v \in U\left(\mathbb{S}_{p-1}\right)}\left[\frac{p}{\delta} \tilde{\Psi}(w+\delta v) v\right] & =\nabla \tilde{\Psi}_\delta(w), \label{psi_exp}\\
\left|\tilde{\Psi}_\delta(w)-\tilde{\Psi}(w)\right| & \leq M \delta. \label{del_psi_dif}
\end{align}
\end{lemma}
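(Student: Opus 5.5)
The two claims are independent: \eqref{psi_exp} follows from the divergence (Stokes) theorem on the ball, and \eqref{del_psi_dif} is a direct Lipschitz estimate. I would prove \eqref{del_psi_dif} first. By definition, $\tilde{\Psi}_\delta(w)-\tilde{\Psi}(w)=\mathbb{E}_{v'\sim U(\mathbb{B}_p)}[\tilde{\Psi}(w+\delta v')-\tilde{\Psi}(w)]$. Jensen's inequality and $M$-Lipschitz continuity then bound the absolute value by $\mathbb{E}[M\delta\|v'\|]\le M\delta$, since $\|v'\|\le 1$.

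For \eqref{psi_exp}, the plan is to write the smoothed function as a normalized integral,
\begin{align*}
\tilde{\Psi}_\delta(w)=\frac{1}{\mathrm{vol}(\delta\mathbb{B}_p)}\int_{\delta\mathbb{B}_p}\tilde{\Psi}(w+z)\,dz .
\end{align*}
Since $\tilde{\Psi}$ is Lipschitz, it is differentiable almost everywhere by Rademacher's theorem, and dominated convergence lets me move the gradient inside the integral. This gives $\nabla\tilde{\Psi}_\delta(w)=\frac{1}{\mathrm{vol}(\delta\mathbb{B}_p)}\int_{\delta\mathbb{B}_p}\nabla\tilde{\Psi}(w+z)\,dz$. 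The divergence theorem for Lipschitz functions, applied componentwise, converts this volume integral into the surface integral $\int_{\delta\mathbb{S}_{p-1}}\tilde{\Psi}(w+z)\frac{z}{\|z\|}\,dS(z)$.

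It remains to compute the normalization. Using $\mathrm{vol}(\delta\mathbb{B}_p)=\delta^p\mathrm{vol}(\mathbb{B}_p)$ and $\mathrm{area}(\delta\mathbb{S}_{p-1})=\delta^{p-1}\mathrm{area}(\mathbb{S}_{p-1})$, together with the identity $\mathrm{area}(\mathbb{S}_{p-1})/\mathrm{vol}(\mathbb{B}_p)=p$, the surface integral divided by the volume equals $\frac{p}{\delta}\mathbb{E}_{v\sim U(\mathbb{S}_{p-1})}[\tilde{\Psi}(w+\delta v)v]$. This is exactly \eqref{psi_exp}.

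The main obstacle is justifying the exchange of differentiation and integration and the use of the divergence theorem when $\tilde{\Psi}$ is only Lipschitz rather than $C^1$. I would avoid both issues by differentiating the moving-domain integral $\int_{w+\delta\mathbb{B}_p}\tilde{\Psi}(x)\,dx$ directly in the direction $e_k$. By the Reynolds transport formula, the derivative equals the flux of $\tilde{\Psi}$ through the boundary, $\int_{w+\delta\mathbb{S}_{p-1}}\tilde{\Psi}(x)\langle n(x),e_k\rangle\,dS$, and this requires only continuity of $\tilde{\Psi}$. As an alternative fallback, I would mollify $\tilde{\Psi}$ to a smooth function, prove the identity for the mollification, and pass to the limit using uniform convergence, which holds because the mollifications are uniformly Lipschitz.
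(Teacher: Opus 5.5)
Your proposal is correct and follows essentially the standard argument the paper relies on; the paper gives no proof of its own and defers to Hazan's textbook. That argument obtains the gradient identity from Stokes'/the divergence theorem on $\delta\mathbb{B}_p$ together with the surface-to-volume ratio $p/\delta$, and the approximation bound from the direct Lipschitz estimate $|\tilde{\Psi}_\delta(w)-\tilde{\Psi}(w)|\leq \mathbb{E}[M\delta\|v'\|]\leq M\delta$. Your extra justification for merely Lipschitz $\tilde{\Psi}$, via Reynolds transport for the translated ball or via mollification and uniform limits, supplies a regularity step that the textbook proof glosses over.
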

\begin{remark}
   The local one-point gradient estimator in Algorithm \ref{dis_grad_free_alg} is inspired by existing works \cite{YUAN_OG,hazan2016introduction,pmlr-v119-wan20b,ZHANG_onepoint}. However, a key distinction lies in the use of the approximate objective function value $\Psi_t (u_t, y_{t})$, which relies on real-time output measurements, rather than the exact steady-state objective function $\tilde{\Psi}_t(u_t)=\Psi_t(u_t, h(u_t, d))$. This difference arises from the lack of model information and the presence of system dynamics, which prevent direct access to steady-state outputs at each time step.
\end{remark}

\begin{remark}
To ensure feasibility of the perturbed input \eqref{input_ex_up}, 
we can adopt a standard technique in bandit optimization based on a shrunk feasible set.
Specifically, define
\[
\Omega_\delta := \{ w \in \Omega : \mathrm{dist}(w, \partial \Omega) \ge \delta \}.
\]
By restricting $w_{i,t}$ to $\Omega_\delta$ and noting that $\|v_{i,t}\|\le 1$, 
it follows that $u_{i,t} \in \Omega$, ensuring that all function evaluations are well-defined.
For clarity of presentation, the algorithm is written over the original set $\Omega$. 
The above restriction can be equivalently imposed without affecting the algorithmic structure or the theoretical results. This technique is widely used in zeroth-order and bandit optimization to handle boundary issues.
\end{remark}

\section{Performance analysis}\label{proof_sec}
In this section, we analyze the regret performance of the proposed DPFOA.
To this end, we introduce the following dynamic regret metric \cite{dis_online_aggre_Va}, which measures the accumulated difference between the cost incurred by the proposed algorithm and the optimal cost at each time step:
\begin{align}
    \mathcal{R}_{i,T}=\sum_{j=1}^n\sum_{t=1}^T \left(\mathbb{E}[\tilde{\Psi}_{j,t}(u_{i,t})]-\tilde{\Psi}_{j,t}(u_t^*)\right),
\end{align}
where $u_t^*=\text{argmin}_{u\in \Omega} \tilde{\Psi}_t(u)$ denotes the optimal solution at time $t$. The regret $\mathcal{R}_{i,T}$ is defined from the perspective of agent $i$, using its local decisions $\{u_{i,t}\}_{t=1}^T$, while performance is evaluated with respect to the aggregate cost $\sum_{j=1}^n\tilde{\Psi}_{j,t}(\cdot)$. Thus, $\mathcal{R}_{i,T}$ measures the gap between the network-wide cost incurred by agent $i$’s decisions and that of the optimal centralized benchmark. We further define a  temporal variation metric capturing non-stationarity as $$D_T=\sum_{t=1}^T\|\nabla \tilde{\Psi}_{i,t,\delta}(w_{i,t})-\nabla \tilde{\Psi}_{i,t-1,\delta}(w_{i,t-1})\|^2,$$ which follows the standard formulation adopted in \cite{dynamic_onlie_fw}.

 \par We first establish several technical lemmas. The first lemma provides uniform bounds on the local gradient estimators.

\par \begin{lemma}
    Suppose Assumptions \ref{func_assum}-\ref{con_set_ass} hold. The local gradient estimators $g_{i,t}$ and $a_{i,t}$ remain uniformly bounded. In particular, there exists a constant $\beta>0$ such that $\|g_{i,t}\|\leq \beta$ and $\|a_{i,t}\|\leq \beta$ for all $i\in \mathcal{V}$ and $t$.
\end{lemma}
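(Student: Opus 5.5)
The plan is to bound $g_{i,t}$ directly from its definition \eqref{git_up} and then pass the bound to $a_{i,t}$ by induction on the convex-combination recursion \eqref{ait_up}.

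\textbf{Bounding $g_{i,t}$.} Since $v_{i,t}\in\mathbb{S}_{p-1}$, we have $\|v_{i,t}\|=1$, so
\begin{align*}
\|g_{i,t}\|\le \frac{p}{\delta}\,\bigl|\Psi_{i,t}(u_{i,t},y_{i,t})-\Psi_{i,t-1}(u_{i,t-1},y_{i,t-1})\bigr| .
\end{align*}
I will split the difference by adding and subtracting $\Psi_{i,t}(u_{i,t-1},y_{i,t-1})$. The temporal part $|\Psi_{i,t}(u_{i,t-1},y_{i,t-1})-\Psi_{i,t-1}(u_{i,t-1},y_{i,t-1})|$ is at most $V$ by Assumption~\ref{ass:temporal_variation}. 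The spatial part is at most $G\|u_{i,t}-u_{i,t-1}\|+G\|y_{i,t}-y_{i,t-1}\|$ by Assumption~\ref{func_assum}. Both inputs lie in $\Omega$, using the shrunk-set remark so that $u_{i,t}=w_{i,t}+\delta v_{i,t}\in\Omega$. Assumption~\ref{con_set_ass} therefore gives $\|u_{i,t}-u_{i,t-1}\|\le D$.

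\textbf{Main obstacle: the outputs.} Controlling $\|y_{i,t}-y_{i,t-1}\|$ is the step I expect to be hardest, because no model of $h_i$ is assumed. I would argue that the outputs remain in a bounded set $\mathcal{Y}$: the steady-state map $h_i(\cdot,d)$ is continuous (as implied by the stable-system abstraction in \eqref{y_plat_dy}), and it is evaluated on the compact set $\Omega$ with the disturbance $d$ bounded, so its image is compact. Alternatively, if the measured $y_{i,t}$ is the steady-state output, one can bypass $y$ altogether by writing $\Psi_{i,t}(u_{i,t},y_{i,t})=\tilde{\Psi}_{i,t}(u_{i,t})$ and using the $M$-Lipschitz property of Assumption~\ref{psi_tilde_del}, giving a spatial part of at most $MD$. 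Either route yields a bound $B:=GD+G\,\mathrm{diam}(\mathcal{Y})$ (or $MD$). For $t=1$, the term $\Psi_{i,0}$ is a fixed finite initial value and is absorbed into the same constant. Hence
\begin{align*}
\|g_{i,t}\|\le \beta:=\frac{p}{\delta}(B+V).
\end{align*}

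\textbf{Bounding $a_{i,t}$.} With $\rho\in(0,1]$, I will induct on $t$. The base case is $a_{i,0}=0$, so $\|a_{i,0}\|\le\beta$. If $\|a_{i,t-1}\|\le\beta$, then the triangle inequality applied to \eqref{ait_up} gives
\begin{align*}
\|a_{i,t}\|\le (1-\rho)\|a_{i,t-1}\|+\rho\|g_{i,t}\|\le(1-\rho)\beta+\rho\beta=\beta,
\end{align*}
which closes the induction and yields the uniform bound for all $i\in\mathcal{V}$ and $t$. The constant $\beta$ depends on $\delta$. This is acceptable for the lemma, since it asserts only existence for fixed parameters.
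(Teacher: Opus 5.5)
Your proposal is correct and follows essentially the same route as the paper: you split $g_{i,t}$ by adding and subtracting $\Psi_{i,t}(u_{i,t-1},y_{i,t-1})$. The spatial part is bounded by $G(D+2C_y)$ using Assumption~\ref{func_assum}, the diameter $D$, and continuity of $h_i$ on a compact set, and the temporal part is bounded by $V$ using Assumption~\ref{ass:temporal_variation}; you then induct on the convex-combination recursion for $a_{i,t}$, exactly as the paper does. Your separate treatment of $t=1$ is unnecessary: Assumption~\ref{ass:temporal_variation} already holds at $t=1$ and $u_{i,0}\in\Omega$ by initialization, so the same $\beta$ covers that step without enlarging it.
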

\begin{proof}
    By the definition of the estimator $g_{i,t}$, one has
    {
    \begin{align*}
        \left\|g_{i, t}\right\| \leq& \frac{p}{\delta}\left\|v_{i, t}\right\| \cdot\left|\Psi_{i, t}\left(u_{i, t}, y_{i, t}\right)-\Psi_{i, t-1}\left(u_{i, t-1}, y_{i, t-1}\right)\right|\\
        \leq &\frac{p}{\delta}\left\|v_{i, t}\right\| \cdot\Big(\underbrace{\left|\Psi_{i, t}\left(u_{i, t}, y_{i, t}\right)-\Psi_{i, t}(u_{i, t-1}, y_{i, t-1})\right|}_{\text{spatial variation}}\\
        &+\underbrace{\left|\Psi_{i, t}(u_{i, t-1}, y_{i, t-1})-\Psi_{i, t-1}(u_{i, t-1}, y_{i, t-1})\right|}_{\text{temporal variation}}\Big)
    \end{align*}}
     Since $v_{i, t}$ is sampled from the unit sphere, it holds that $\left\|v_{i, t}\right\|=1$. We first consider the spatial variation. By Assumptions \ref{func_assum} and \ref{con_set_ass}, we have
\begin{align*}
    \left|\Psi_{i, t}(u_{i, t}, y_{i, t})\!-\!\Psi_{i, t}(u_{i, t-1}, y_{i, t-1})\right| \leq GD\!+G\|y_{i, t}\!-y_{i, t-1}\|. 
\end{align*}
For the output $y_{i, t}=h_i(u_{i, t}, d)$, note that $u_{i, t} \in \Omega$ and the disturbance $d$ is bounded. Hence, the pair $(u_{i, t},d)$ lies in a compact set. Under the standard assumption that the steady-state map $h_i(u_i, d)$ is continuous, it follows that $y_{i, t}$ is uniformly bounded. Therefore, there exists a constant $C_y>0$ such that $\|y_{i, t}\| \leq C_y$ for all $i, t$, which implies $\|y_{i, t}-y_{i, t-1}\| \leq 2 C_y$. We obtain
\begin{align*}
    \left|\Psi_{i, t}(u_{i, t}, y_{i, t})\!-\!\Psi_{i, t}(u_{i, t-1}, y_{i, t-1})\right| \leq G(D+2C_y). 
\end{align*}
{
Then, for the temporal variation term, we know $\left|\Psi_{i, t}(u_{i, t-1}, y_{i, t-1})-\Psi_{i, t-1}(u_{i, t-1}, y_{i, t-1})\right|\leq V$ by Assumption \ref{ass:temporal_variation}. 
Combining the above bounds, we obtain
\begin{align*}
    \left\|g_{i, t}\right\| \leq \frac{p}{\delta}(G(D+2C_y)+V)=: \beta.
\end{align*}}
\par  Next, we show the boundedness of $a_{i,t}$. Since $a_{i,0}=0$ and $a_{i,t}$ is a linear combinition of $a_{i,t-1}$ and $g_{i,t}$ in \eqref{ait_up}, we can prove the boundness of $a_{i,t}$ by induction, i.e. $\|a_{i,t}\|\leq \beta$. Then for all $i\in \mathcal{V}$ and all time $t$, it holds that $\|g_{i,t}\|\leq \beta$ and $\|a_{i,t}\|\leq \beta$. This completes the proof.
\end{proof}
\par The next two lemmas characterize key properties of the global gradient estimator $d_{i,t}$, whose proofs follow directly from Lemma 3 and Lemma 6 in \cite{pmlr-v70-zhang17g}, respectively.
\begin{lemma}\label{consens_vari_dif_lem}
    Suppose Assumption \ref{net_assum} holds. If $\|a_{i,t}\|\leq \beta$, then for any $i\in \mathcal{V}$, we have
    \begin{align}
        \|d_{i,t+1}-d_{i,t}\|\leq \alpha \beta,
    \end{align}
    where $\alpha=\frac{1+\sigma_2(A)}{1-\sigma_2(A)} \sqrt{n}+1$, and $\sigma_2(A)$ denotes the second largest eigenvalue of matrix $A$. 
\end{lemma}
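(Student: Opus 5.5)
The plan is to analyze the recursion $d_{i,t+1}=\sum_{j}A_{ij}d_{j,t}+a_{i,t}$ in stacked form. Write $d_t=[d_{1,t},\dots,d_{n,t}]$ and $a_t=[a_{1,t},\dots,a_{n,t}]$, so that $d_{t+1}=d_tA+a_t$ ($A$ symmetric). With $d_{i,1}=0$, unrolling gives
\begin{align*}
d_{t+1}=\sum_{k=1}^{t} a_k A^{t-k}.
\end{align*}
We decompose each agent's variable into its network average and a disagreement part. Let $\bar d_t=\frac1n\sum_i d_{i,t}$ and $\bar a_t=\frac1n\sum_i a_{i,t}$. Double stochasticity of $A$ gives $\bar d_{t+1}=\bar d_t+\bar a_t$, so $\|\bar d_{t+1}-\bar d_t\|=\|\bar a_t\|\le\beta$. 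We then use
\begin{align*}
\|d_{i,t+1}-d_{i,t}\|\le \|d_{i,t+1}-\bar d_{t+1}\|+\|\bar d_{t+1}-\bar d_t\|+\|\bar d_t-d_{i,t}\|.
\end{align*}
The middle term accounts for the ``$+1$'' in $\alpha$. It then suffices to show that the consensus error satisfies $\|d_{i,t}-\bar d_t\|\le \frac{1+\sigma_2(A)}{2(1-\sigma_2(A))}\sqrt n\,\beta$ uniformly in $t$. Alternatively, we can bound the increment $d_{t+1}-d_t$ directly and recover the same constant.

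For the consensus bound, we write $d_{t+1}-\bar d_{t+1}\mathbf 1^\top=\sum_{k=1}^t a_k\bigl(A^{t-k}-\tfrac1n\mathbf 1\mathbf 1^\top\bigr)$. By Assumption~\ref{net_assum}, $A$ is symmetric, doubly stochastic and connected, so $\|A^{s}-\frac1n\mathbf 1\mathbf 1^\top\|_2\le\sigma_2(A)^s$, where $\sigma_2(A)<1$ is the second largest eigenvalue modulus. Each column $i$ of $a_k M$ has norm at most $\|a_k\|_F\|M\|_2\le\sqrt n\,\beta\,\|M\|_2$, because $\|a_{i,k}\|\le\beta$ for every $i$. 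Summing the geometric series then gives $\|d_{i,t}-\bar d_t\|\le \sqrt n\beta/(1-\sigma_2)$.

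A cleaner route, which matches the constant in \cite{pmlr-v70-zhang17g}, works with the increment directly. We have $d_{t+1}-d_t=(d_t-d_{t-1})A+(a_t-a_{t-1})$, or we expand $d_{t+1}-d_t=a_t+\sum_{k=1}^{t-1}a_k(A^{t-k}-A^{t-1-k})$. Subtracting the projector inside each difference gives $A^{s}-A^{s-1}=(A^{s}-J)-(A^{s-1}-J)$ with $J=\frac1n\mathbf 1\mathbf 1^\top$. Its norm is at most $\sigma_2^{s}+\sigma_2^{s-1}$, which sums to $\frac{1+\sigma_2}{1-\sigma_2}$. Combining this with the $\sqrt n\beta$ column bound and the leading term $a_t$ (norm $\le\beta$) yields $\|d_{i,t+1}-d_{i,t}\|\le\bigl(\frac{1+\sigma_2}{1-\sigma_2}\sqrt n+1\bigr)\beta=\alpha\beta$.

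The main technical point is the spectral estimate $\|A^s-J\|_2\le\sigma_2(A)^s$. This follows from the eigendecomposition of the symmetric matrix $A$, using connectivity to ensure that the eigenvalue $1$ is simple and that all other eigenvalues have modulus strictly less than $1$. One must interpret $\sigma_2(A)$ as the second largest eigenvalue in modulus, which may require aperiodicity, e.g.\ positive diagonal weights. The second point is the careful conversion from the Frobenius norm of the stacked $a_k$ to per-agent column norms, which is where the $\sqrt n$ enters.
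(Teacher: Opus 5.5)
Your increment-based route is correct: expanding $d_{t+1}-d_t=a_t+\sum_{k=1}^{t-1}a_k(A^{t-k}-A^{t-1-k})$ and combining $\|A^s-A^{s-1}\|_2\le\sigma_2^s+\sigma_2^{s-1}$ with the $\sqrt{n}\beta$ column bound is essentially the proof of Lemma~3 in \cite{pmlr-v70-zhang17g}, which the paper only cites. Discard the first route through $\bar d_t$: it gives only $\frac{2\sqrt{n}}{1-\sigma_2(A)}+1$, and the sharper consensus bound you say would suffice is false in general (take $n=2$, $A=\frac{1}{2}\mathbf{1}\mathbf{1}^\top$, $a_{1,t}=-a_{2,t}$).

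Your caveat on $\sigma_2(A)$ is genuinely needed. It must be read as the second largest eigenvalue in modulus, strictly below $1$, because Assumption~\ref{net_assum} alone permits the eigenvalue $-1$. For example, with $A=\bigl[\begin{smallmatrix}0&1\\1&0\end{smallmatrix}\bigr]$ and $a_{1,t}=-a_{2,t}$ alternating in sign, $\|d_{i,t+1}-d_{i,t}\|$ grows linearly in $t$.
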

\begin{lemma}\label{average_cons_lem}
Suppose Assumption \ref{net_assum} holds. If $\|a_{i,t}\|\leq \beta$, then for any $i\in \mathcal{V}$, we have
\begin{align}
    \|d_{i,t}-\bar{d}_t\|\leq \alpha' \beta,
\end{align}
where $\bar{d}_t=\frac{1}{n}\sum_{i=1}^n d_{i,t}$ and $\alpha'=\frac{\sqrt{n}}{1-\sigma_2(A)}$.
\end{lemma}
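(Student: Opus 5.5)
We need to prove Lemma \ref{average_cons_lem}: $\|d_{i,t}-\bar d_t\|\le \alpha'\beta$ under the recursion $d_{i,t+1}=\sum_j A_{ij}d_{j,t}+a_{i,t}$ with $d_{i,1}=0$.

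We stack the variables into matrices $d_t=[d_{1,t},\dots,d_{n,t}]^\top\in\mathbb{R}^{n\times p}$ and $a_t$ similarly, so that $d_{t+1}=Ad_t+a_t$. Since $A$ is doubly stochastic, averaging gives $\bar d_{t+1}=\bar d_t+\bar a_t$ with $\bar a_t=\frac1n\sum_i a_{i,t}$. Let $J=\frac1n\mathbf 1\mathbf 1^\top$. The disagreement $e_t=(I-J)d_t$ then satisfies
\begin{align*}
e_{t+1}=(I-J)Ad_t+(I-J)a_t=(A-J)e_t+(I-J)a_t,
\end{align*}
using $AJ=JA=J$.

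Unrolling from $e_1=0$ gives
\begin{align*}
e_{t}=\sum_{s=1}^{t-1}(A-J)^{t-1-s}(I-J)a_s .
\end{align*}
By Assumption \ref{net_assum}, $A$ is symmetric, doubly stochastic and connected, so $\|A-J\|_2=\sigma_2(A)<1$. Here $\sigma_2(A)$ is understood as the second largest singular value, i.e., the largest eigenvalue modulus other than $1$. In addition, $\|I-J\|_2\le1$. Taking Frobenius norms and using $\|a_s\|_F\le\sqrt n\,\beta$ (because each row satisfies $\|a_{i,s}\|\le\beta$), we obtain
\begin{align*}
\|e_t\|_F\le \sum_{k\ge0}\sigma_2(A)^k\sqrt n\,\beta=\frac{\sqrt n}{1-\sigma_2(A)}\beta .
\end{align*}
Finally, $\|d_{i,t}-\bar d_t\|$ is the norm of the $i$-th row of $e_t$, which is at most $\|e_t\|_F$. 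This gives the claim with $\alpha'=\frac{\sqrt n}{1-\sigma_2(A)}$.

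The only delicate point is the spectral bound $\|(A-J)^k\|_2=\sigma_2(A)^k$. It follows from symmetry: $A-J$ has the same eigenvectors as $A$, with eigenvalue $1$ replaced by $0$. Connectivity guarantees that $\sigma_2(A)<1$. If $A$ could have an eigenvalue near $-1$, we would interpret $\sigma_2$ as the second largest eigenvalue in magnitude, which is the convention used in \cite{pmlr-v70-zhang17g}.
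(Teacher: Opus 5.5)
Your proof is correct and is essentially the standard argument behind the cited Lemma 6 of \cite{pmlr-v70-zhang17g}: unroll the consensus recursion from $d_{i,1}=0$, contract the disagreement by $\sigma_2(A)^k$, and sum the geometric series; the only difference is bookkeeping, since you get the factor $\sqrt n$ from $\|a_s\|_F\le\sqrt n\,\beta$ and a row-versus-Frobenius comparison, whereas that argument gets it from the $\ell_1$-to-$\ell_2$ bound $\|A^k e_i-\tfrac1n\mathbf 1\|_1\le\sqrt n\,\sigma_2(A)^k$. One small correction: connectivity alone only gives $\lambda_2(A)<1$, while the magnitude-sense $\sigma_2(A)<1$ you need also requires aperiodicity (e.g.\ some $A_{ii}>0$, ruling out the eigenvalue $-1$ of a bipartite graph), which is an implicit assumption inherited from the paper rather than a flaw in your argument.
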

\par The following lemma establishes the Lipschitz-continuity of the projection operator, whose proof can be found in Lemma 5 of \cite{Duchi_tac}.
\begin{lemma}\label{pro_lem}
    Let $\Pi_{\Omega}(u,\eta)=\text{argmin}_{w\in \Omega} \eta u^\top w+\|w\|^2$. Then the operator $\Pi_{\Omega}(\cdot,\eta)$ is Lipschitz continuous in its first argument, i.e.,
    $$\|\Pi_{\Omega}(w,\eta)-\Pi_{\Omega}(v,\eta)\|\leq \eta \|u-v\|.$$
\end{lemma}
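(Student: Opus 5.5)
Since Lemma~\ref{pro_lem} compares $\Pi_{\Omega}(\cdot,\eta)$ at two first arguments (call them $u$ and $v$, with $u$ in place of the typo $w$), the plan is the standard strong-convexity argument for the proximal map underlying dual averaging. Write $x_u:=\Pi_{\Omega}(u,\eta)$ and $x_v:=\Pi_{\Omega}(v,\eta)$. The objective $\phi_u(w)=\eta u^\top w+\|w\|^2$ is $2$-strongly convex and differentiable, and $\Omega$ is convex and compact by Assumption~\ref{con_set_ass}. Hence each minimizer exists, is unique, and is characterized by its first-order optimality condition.

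\textbf{Step 1 (optimality conditions).} For all $w\in\Omega$,
\begin{align*}
\langle \eta u+2x_u,\, w-x_u\rangle\ge 0,\qquad \langle \eta v+2x_v,\, w-x_v\rangle\ge 0 .
\end{align*}
Take $w=x_v$ in the first inequality and $w=x_u$ in the second, then add them. This gives
\begin{align*}
\langle \eta(u-v)+2(x_u-x_v),\, x_v-x_u\rangle\ge 0,
\end{align*}
that is,
\begin{align*}
2\|x_u-x_v\|^2\le \eta\langle u-v,\, x_v-x_u\rangle .
\end{align*}

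\textbf{Step 2 (Cauchy--Schwarz).} Bounding the right-hand side yields $2\|x_u-x_v\|^2\le \eta\|u-v\|\,\|x_u-x_v\|$. If $x_u=x_v$ the claim is trivial. Otherwise, dividing by $\|x_u-x_v\|$ gives
\begin{align*}
\|x_u-x_v\|\le \tfrac{\eta}{2}\|u-v\|\le \eta\|u-v\|,
\end{align*}
which is the stated bound with room to spare.

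There is no real obstacle in this argument. The only delicate points are that uniqueness and existence of the minimizer need strong convexity together with closedness and convexity of $\Omega$, and that the variable name in the statement must be read correctly as $u$. An equivalent route is to complete the square: $\Pi_{\Omega}(u,\eta)$ is the Euclidean projection of $-\tfrac{\eta}{2}u$ onto $\Omega$. Nonexpansiveness of the projection onto a convex set then gives the same $\tfrac{\eta}{2}$ constant directly.
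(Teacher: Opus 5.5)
Your proof is correct and follows the standard argument behind the result the paper cites without reproducing (Lemma 5 of Duchi et al., originally due to Nesterov): add the first-order optimality conditions at the two minimizers and use strong convexity of the regularizer. Because $\|w\|^2$ is $2$-strongly convex, this gives the sharper constant $\eta/2$, and your reading of the typo (the $w$ on the left-hand side should be $u$) is the right one.
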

\par With the above results, we now discuss the consensus error $\|w_{i,t}-\bar{w}_t\|$ for each agent $i$, where
 \begin{align}\label{bar_W_def}
       \bar{w}_t=\text{argmin}_{w\in \Omega} \bar{F}_t(w) \ \text{with} \ 
       \bar{F}_t(w)=\eta \bar{d}_t^\top w+\|w\|^2.
 \end{align}
 Let $\hat{w}_{i,t}=\text{argmin}_{w\in \Omega} \eta d_{i,t}^\top w+\|w\|^2$. Applying Lemma \ref{pro_lem} yields $\|\hat{w}_{i,t}-\bar{w}_t\|\leq \eta \|d_{i,t}-\bar{d}_t\|\leq \eta \alpha' \beta$, which implies that 
\begin{align}\label{w_dif_1}
    \|w_{i,t}-\bar{w}_t\|\leq& \|w_{i,t}-\hat{w}_{i,t}\|+\|\hat{w}_{i,t}-\bar{w}_t\|\notag\\
    \leq&\sqrt{F_{i,t}(w_{i,t})-F_{i,t}(\hat{w}_{i,t})}+\eta \alpha' \beta,
\end{align}
where the second inequality holds since $F_{i,t}$ is $2$-strongly convex\footnote{A differentiable function $F$ is $2$-strongly convex if for all $x,y$, $F(y)\geq F(x)+\nabla F(x)^\top (y-x)+\|y-x\|^2$.} and $\nabla F_{i,t}(\hat{w}_{i,t})=0$.
Define 
\begin{align*}
    h_{i,t}=F_{i,t}(w_{i,t})-F_{i,t}(\hat{w}_{i,t}).
\end{align*}
We establish the upper bound of $h_{i,t}$ in the following lemma.
\begin{lemma}\label{h_it_lem}
    Suppose Assumptions \ref{func_assum}, \ref{con_set_ass} and \ref{net_assum} hold. If the stepsizes $\eta$ and $\sigma_t=\frac{1}{\sqrt{t}}$ are chosen such that $\eta \alpha \beta \sqrt{h_{i,t+1}}\leq \sigma_{t}^2 D^2$, then, for any $i\in \mathcal{V}$ and any $t=1,\cdots,T$, the following bound holds 
\begin{align}\label{hit_bound}
    h_{i,t}\leq 4D^2 \sigma_t.
\end{align}
where $D$ is the diameter of the constrained set $\Omega$.
\end{lemma}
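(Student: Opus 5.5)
We argue by induction on $t$, following the standard Frank--Wolfe progress analysis adapted to the drifting objectives $F_{i,t}$ (as in the projection-free online analyses of \cite{pmlr-v70-zhang17g}). For the base case $t=1$ we have $\sigma_1=1$. Since $F_{i,1}$ is quadratic with Hessian $2I$ and both $w_{i,1}$ and $\hat{w}_{i,1}$ lie in $\Omega$, convexity and smoothness bound $h_{i,1}$ by a constant multiple of $D^2$ plus an $\eta\|d_{i,1}\|D$ term. With $d_{i,1}=0$ this term vanishes. The constant $4$ in \eqref{hit_bound} leaves enough room, so $h_{i,1}\leq 4D^2\sigma_1$.

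For the inductive step, assume $h_{i,t}\leq 4D^2\sigma_t$. We split
\begin{align*}
h_{i,t+1}=\big[F_{i,t+1}(w_{i,t+1})-F_{i,t+1}(\hat w_{i,t+1})\big],
\end{align*}
and compare it with $F_{i,t}$ through the identity $F_{i,t+1}(w)=F_{i,t}(w)+\eta\langle d_{i,t+1}-d_{i,t},w\rangle$. We then treat two pieces.

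\emph{Frank--Wolfe progress on $F_{i,t}$.} $F_{i,t}$ is $2$-smooth, $s_{i,t}$ minimizes the linearization over $\Omega$, and $\|s_{i,t}-w_{i,t}\|\leq D$. Hence
\begin{align*}
F_{i,t}(w_{i,t+1})\leq F_{i,t}(w_{i,t})+\sigma_t\langle\nabla F_{i,t}(w_{i,t}),s_{i,t}-w_{i,t}\rangle+\sigma_t^2D^2 .
\end{align*}
By convexity, $\langle\nabla F_{i,t}(w_{i,t}),s_{i,t}-w_{i,t}\rangle\leq F_{i,t}(\hat w_{i,t})-F_{i,t}(w_{i,t})=-h_{i,t}$. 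Therefore $F_{i,t}(w_{i,t+1})-F_{i,t}(\hat w_{i,t})\leq(1-\sigma_t)h_{i,t}+\sigma_t^2D^2$.

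\emph{Drift of the objective.} Write
\begin{align*}
h_{i,t+1}=F_{i,t}(w_{i,t+1})-F_{i,t}(\hat w_{i,t})+\big[F_{i,t}(\hat w_{i,t})-F_{i,t}(\hat w_{i,t+1})\big]+\eta\langle d_{i,t+1}-d_{i,t},w_{i,t+1}-\hat w_{i,t+1}\rangle .
\end{align*}
The bracketed term is $\leq 0$ because $\hat w_{i,t}$ minimizes $F_{i,t}$. For the last term we use Lemma \ref{consens_vari_dif_lem} together with strong convexity, which gives $\|w_{i,t+1}-\hat w_{i,t+1}\|\leq\sqrt{h_{i,t+1}}$. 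This bounds the term by $\eta\alpha\beta\sqrt{h_{i,t+1}}$, and the stepsize hypothesis makes it at most $\sigma_t^2D^2$. Collecting the pieces,
\begin{align*}
h_{i,t+1}\leq(1-\sigma_t)h_{i,t}+2\sigma_t^2D^2\leq 4D^2\sigma_t-4D^2\sigma_t^2+2D^2\sigma_t^2=4D^2\sigma_t-2D^2\sigma_t^2 .
\end{align*}

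\emph{Closing the induction.} It remains to show $\sigma_t-\tfrac12\sigma_t^2\leq\sigma_{t+1}$ with $\sigma_t=1/\sqrt t$, i.e. $\frac1{\sqrt t}-\frac1{\sqrt{t+1}}\leq\frac1{2t}$. The left side equals $\frac{1}{\sqrt{t}\sqrt{t+1}(\sqrt{t}+\sqrt{t+1})}\leq\frac{1}{2t}$ for $t\geq1$, which gives $h_{i,t+1}\leq4D^2\sigma_{t+1}$.

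The main obstacle is the drift term. The hypothesis involves $\sqrt{h_{i,t+1}}$ implicitly, so the argument must be ordered carefully: we use the assumption directly rather than solving a quadratic inequality in $\sqrt{h_{i,t+1}}$. The same care is needed to make sure the sign bookkeeping between $\hat w_{i,t}$ and $\hat w_{i,t+1}$ is correct, so that the only new error is the inner-product term controlled by Lemma \ref{consens_vari_dif_lem}.
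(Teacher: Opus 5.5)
Your inductive step is correct and follows the paper's argument. You derive from scratch the recursion $h_{i,t+1}\le(1-\sigma_t)h_{i,t}+\sigma_t^2D^2+\eta\|d_{i,t+1}-d_{i,t}\|\sqrt{h_{i,t+1}}$, which the paper imports from Lemma 2 of \cite{pmlr-v70-zhang17g}. You then absorb the drift term using Lemma \ref{consens_vari_dif_lem} and the stepsize hypothesis. You also check explicitly that $\sigma_t-\tfrac12\sigma_t^2\le\sigma_{t+1}$, which the paper only attributes to ``the definition of $\sigma_t$''.

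The gap is in the base case. With $d_{i,1}=0$ you have $h_{i,1}=\|w_{i,1}\|^2-\|\hat w_{i,1}\|^2=2\langle\hat w_{i,1},w_{i,1}-\hat w_{i,1}\rangle+\|w_{i,1}-\hat w_{i,1}\|^2\le 2\|\hat w_{i,1}\|D+D^2$. The first term depends on how far $\Omega$ lies from the origin, not only on $D$, so ``a constant multiple of $D^2$'' does not follow from convexity and smoothness. For example, $\Omega=[100,101]\subset\mathbb{R}$ with $w_{i,1}=101$ has $D=1$ but $h_{i,1}=201>4D^2\sigma_1$. Taking $\eta$ small enough meets the stepsize hypothesis, so the claimed bound at $t=1$ is false as stated. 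The paper asserts $\|w_{i,1}\|^2-\|\hat w_{i,1}\|^2\le 4D^2\sigma_1$ with no justification, so this gap is shared with the paper rather than introduced by you.

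There are two ways to repair it. One is to add a hypothesis: $0\in\Omega$ gives $\hat w_{i,1}=0$ and $h_{i,1}=\|w_{i,1}\|^2\le D^2$, and initializing $w_{i,1}=\hat w_{i,1}$ gives $h_{i,1}=0$. The other uses your own recursion: since $\sigma_1=1$, it gives $h_{i,2}\le 2D^2\le 4D^2\sigma_2$ whatever $h_{i,1}$ is. The induction can therefore start at $t=2$, and only the single case $t=1$ needs the extra condition.
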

\begin{proof}
    Following a similar analysis to Lemma 2 of \cite{pmlr-v70-zhang17g}, we obtain
    \begin{align}
    h_{i,t+1} \leq & \left(1-\sigma_{t}\right) h_{i,t}+\sigma_{t}^2 D^2 \notag\\
    & +\eta\left\|d_{i,t+1}-d_{i,t}\right\| \sqrt{h_{i,t+1}}.
    \end{align}
    In addition, by Lemma \ref{consens_vari_dif_lem}, we have known that $\|d_{i,t+1}-d_{i,t}\|\leq \alpha \beta$. Under the chosen stepsizes choice ensuring $\eta\alpha \beta \sqrt{h_{i,t+1}} \leq \sigma_{t}^2 D^2$, the recursion simplifies to
\begin{align*}
    h_{i,t+1} \leq\left(1-\sigma_{t}\right) h_{i,t}+2 D^2 \sigma_{t}^2.
\end{align*}
 Using this recursion, we can prove the bound \eqref{hit_bound} by induction. For the base case $t=1$, by definition, we have
\begin{align*}
h_{i,1} & =F_{i,1}\left(w_{i,1}\right)-F_{i,1}\left(\hat{w}_{i,t}\right) \\
& =\left\|w_{i,1}\right\|^2-\left\|\hat{w}_{i,t}\right\|^2 \\
& \leq 4 D^2 \sigma_{1}.
\end{align*}
\par Next, assume that the bound holds for $t$. Using the recursion, we show that it also holds for $t+1$ :
$$
\begin{aligned}
h_{i,t+1} & \leq\left(1-\sigma_{t}\right) h_{i,t}+2 D^2 \sigma_{t}^2 \\
& \leq 4 D^2 \sigma_{t}\left(1-\sigma_{t}\right)+2 D^2 \sigma_{t}^2 \\
& =4 D^2 \sigma_{t}\left(1-\sigma_{t}+\frac{\sigma_{t}}{2}\right) \\
& =4 D^2 \sigma_{t}\left(1-\frac{\sigma_{t}}{2}\right) \\
& \leq 4 D^2 \sigma_{t+1},
\end{aligned}
$$
where the final inequality follows from the definition of $\sigma_{t}$. This completes the induction and thus the proof.
\end{proof}
\begin{remark}
To satisfy the condition in Lemma \ref{h_it_lem}, one can choose $\eta=\frac{(1-\sigma_2(A))D}{2(\sqrt{n}+1+(\sqrt{n}-1)\sigma_2(A))\beta T^{3/4}}$. With some straightforward analysis, it can be verified that this choice of $\eta$ indeed fulfills the condition required in Lemma \ref{h_it_lem}. A detailed derivation can be found in Appendix E of \cite{pmlr-v70-zhang17g}.
\end{remark}
\par Using Lemma \ref{h_it_lem}, we can further bound the consensus error in \eqref{w_dif_1}, as stated in the following proposition.
\begin{proposition}
    Suppose Assumptions \ref{func_assum}, \ref{con_set_ass} and \ref{net_assum} hold. The cumulative deviation between $w_{i,t}$ and $\bar{w}_t$ over horizon $T$ satisfies \begin{align}\label{consensus_inter_result}
    \sum_{t=1}^T \|w_{i,t}-\bar{w}_t\| \leq \frac{8}{3}D T^{3/4}+\eta \alpha' \beta  T,
    \end{align}
    where $\alpha'=\frac{\sqrt{n}}{1-\sigma_2(A)}$.
\end{proposition}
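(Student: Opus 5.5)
The plan is to sum the per-step consensus bound \eqref{w_dif_1} over $t=1,\dots,T$ and control the square-root term with Lemma \ref{h_it_lem}. From \eqref{w_dif_1} and the definition of $h_{i,t}$, for every $t$,
\begin{align*}
\|w_{i,t}-\bar{w}_t\|\leq \sqrt{h_{i,t}}+\eta\alpha'\beta .
\end{align*}
The constant term $\eta\alpha'\beta$ comes from Lemma \ref{pro_lem} together with Lemma \ref{average_cons_lem}. Summing it over the horizon gives exactly the term $\eta\alpha'\beta T$ in \eqref{consensus_inter_result}, so nothing more needs to be done for it.

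For the first term I will invoke Lemma \ref{h_it_lem} with $\sigma_t=1/\sqrt{t}$. This gives $h_{i,t}\leq 4D^2 t^{-1/2}$, hence
\begin{align*}
\sqrt{h_{i,t}}\leq 2D\,t^{-1/4}.
\end{align*}
Since $x\mapsto x^{-1/4}$ is decreasing, the standard integral comparison applies:
\begin{align*}
\sum_{t=1}^T t^{-1/4}\leq \int_0^T x^{-1/4}\,dx=\tfrac{4}{3}T^{3/4}.
\end{align*}
Multiplying by $2D$ gives $\frac{8}{3}DT^{3/4}$. Adding the two contributions yields \eqref{consensus_inter_result}.

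The only delicate point is whether Lemma \ref{h_it_lem} is applicable at all. It needs the stepsize condition $\eta\alpha\beta\sqrt{h_{i,t+1}}\leq\sigma_t^2D^2$, which I will take as guaranteed by the choice of $\eta$ in the preceding remark. It also needs the uniform bound $\|a_{i,t}\|\leq\beta$ from the first lemma of this section, which feeds both Lemma \ref{consens_vari_dif_lem} and Lemma \ref{average_cons_lem}. Strictly speaking, that bound relies on Assumption \ref{ass:temporal_variation} as well, so I would note this hypothesis explicitly in the proof. Once these conditions are in place, the rest is the routine summation above.
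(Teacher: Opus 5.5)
Your proposal is correct and follows the paper's proof essentially verbatim. You combine \eqref{w_dif_1} with Lemma \ref{h_it_lem} for $\sigma_t=1/\sqrt{t}$ to get $\|w_{i,t}-\bar{w}_t\|\leq 2Dt^{-1/4}+\eta\alpha'\beta$ and then sum, with your integral comparison $\sum_{t=1}^T t^{-1/4}\leq \frac{4}{3}T^{3/4}$ supplying the step the paper leaves implicit; your remarks that the statement silently relies on $\sigma_t=1/\sqrt{t}$, on the stepsize condition of Lemma \ref{h_it_lem}, and (through $\beta$) on Assumption \ref{ass:temporal_variation} are also accurate.
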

\begin{proof}
Using \eqref{w_dif_1} and Lemma \ref{h_it_lem}, we have
    \begin{align*}
    \|w_{i,t}-\bar{w}_t\|\leq &\sqrt{h_{i,t}}+\eta \alpha' \beta\\
    \leq& 2D  \sqrt{\sigma_t}+\eta \alpha' \beta\\
    \leq& 2D  t^{-1/4}+\eta \alpha' \beta.
\end{align*}
By summing from $t=1$ to $T$, we have that for all $i\in \mathcal{V}$, 
\begin{align*}
    \sum_{t=1}^T \|w_{i,t}-\bar{w}_t\| \leq \frac{8}{3}D  T^{3/4}+\eta \alpha' \beta  T.
\end{align*}
This completes the proof.
\end{proof}

 Using the consensus error bound in \eqref{consensus_inter_result}, we now analyze the regret of the proposed DPFOA algorithm. Without loss of generality, we let $u_t^*=w_t^*$. 
For any $i,j\in \mathcal{V}$, we have
\begin{align}\label{Psi_ineq}
    &\sum_{t=1}^T \tilde{\Psi}_{j,t}(u_{i,t})-\sum_{t=1}^T \tilde{\Psi}_{j,t}(u_t^*)\notag\\
    =&\sum_{t=1}^T  \tilde{\Psi}_{j,t}(w_{i,t}+\delta v_{i,t})-\sum_{t=1}^T \tilde{\Psi}_{j,t}(w_t^*)\notag\\
    \leq& \sum_{t=1}^T\left(\tilde{\Psi}_{j,t}(w_{i,t})+M\|\delta v_{i,t}\|\right)-\sum_{t=1}^T\tilde{\Psi}_{j,t}(w_t^*)\notag\\
    \leq&\sum_{t=1}^T \tilde{\Psi}_{j,t}(w_{i,t})-\sum_{t=1}^T \tilde{\Psi}_{j,t}(w_t^*)+\delta M T\notag\\
    \leq&\sum_{t=1}^T \left(\tilde{\Psi}_{j,t,\delta}(w_{i,t})\!+\!\delta M \!\right)\!-\!\sum_{t=1}^T\left(\tilde{\Psi}_{j,t,\delta}(w_t^*)\!-\!\delta M \!\right)\!+\!\delta M T\notag\\
    \leq& \sum_{t=1}^T \left(\tilde{\Psi}_{j,t,\delta}(w_{i,t})-\tilde{\Psi}_{j,t,\delta}(w_t^*)\right)+3 \delta M T,
\end{align}
where the first inequality follows from the $M$-Lipschitz continuity of $\tilde{\Psi}_{i,t}$ in Assumption \ref{psi_tilde_del} and the third inequality follows from Lemma \ref{f_del_dif_lem}.
\par 
Using Assumption \ref{psi_tilde_del} together with the inequality \eqref{consensus_inter_result}, we now analyze the remaining term $\tilde{\Psi}_{j,t,\delta}(w_{i,t})-\tilde{\Psi}_{j,t,\delta}(w_t^*)$, which appears on the right-hand side of \eqref{Psi_ineq}.
Specifically, for any $i,j\in \mathcal{V}$, we have
\begin{align}\label{Psi_dif_2}
    &\sum_{t=1}^T\tilde{\Psi}_{j,t,\delta}(w_{i,t})-\tilde{\Psi}_{j,t,\delta}(w_t^*) \notag\\
    \leq&\sum_{t=1}^T \tilde{\Psi}_{j,t,
    \delta}(w_{j,t})-\tilde{\Psi}_{j,t,\delta}(w_t^*)+\sum_{t=1}^T M\|w_{j,t}-\bar{w}_t\|\notag\\
    &+\sum_{t=1}^T M\|w_{i,t}-\bar{w}_t\|\notag\\
    \leq& \sum_{t=1}^T \nabla \tilde{\Psi}_{j,t,
    \delta}(w_{j,t})^\top (w_{j,t}\!-w_t^*)\!+\!2 M (\frac{8}{3}D T^{3/4}\!+\!\eta \alpha' \beta  T),
\end{align}
where the first inequality holds by Assumption \ref{psi_tilde_del} and the second inequality follows from \eqref{consensus_inter_result}.
\par Next, we focus on the upper bound for the term \\$\sum_{t=1}^T \nabla \tilde{\Psi}_{j,t,\delta}(w_{j,t})^\top (w_{j,t}-w_t^*)$ appearing on the right-hand side of \eqref{Psi_dif_2}.
For clarity, we define  
\begin{align}\label{D_t_def}
    \mathcal{P}_t= (\nabla \tilde{\Psi}_{j,t,\delta}(w_{j,t})-a_{j,t})^\top (w_{j,t}-w_t^*).
\end{align}
Using the boundedness of $\Omega$, it follows that
\begin{align*}
    \sum_{t=1}^T \mathcal{P}_t\leq & \sum_{t=1}^T \|\nabla \tilde{\Psi}_{j,t,\delta}(w_{j,t})-a_{j,t}\|\|w_{j,t}-w_t^*\|\\
    \leq& D   \sum_{t=1}^T \|\nabla \tilde{\Psi}_{j,t,\delta}(w_{j,t})-a_{j,t}\|.
\end{align*}
\par The next proposition provides an upper bound for the cumulative deviation $\sum_{t=1}^T \|\nabla \tilde{\Psi}_{i,t,\delta}(w_{i,t})-a_{i,t}\|$.
\begin{proposition}\label{sum_norm_sub_lem}
    Suppose Assumptions \ref{func_assum}-\ref{net_assum} hold. If the stepsizes $\rho=\frac{1}{T^c}$ with $c\in (0,1)$ and $\delta=\frac{1}{T^{b}}$ with $b\in (0,1)$, we have
\begin{align*}
    &\sum_{t=1}^T\mathbb{E}\left[\|\nabla \tilde{\Psi}_{i,t,\delta}(w_{i,t})-a_{i,t}\|\right]\\
    &=\mathcal{O}(T^{1+b-\frac{c}{2}})+\mathcal{O}(T^{\frac{1}{2}+b+\frac{c}{2}})+\mathcal{O}(T^{c+\frac{1}{2}})\sqrt{D_T}.
\end{align*}
\end{proposition}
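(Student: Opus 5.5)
We follow the momentum-style variance-reduction argument used for one-point/smoothed Frank--Wolfe estimators (as in \cite{Chen2020FrankWolfe,dynamic_onlie_fw}). Fix $i$ and write $\nabla_t:=\nabla \tilde{\Psi}_{i,t,\delta}(w_{i,t})$ and $e_t:=a_{i,t}-\nabla_t$. Using \eqref{ait_up}, we decompose
\begin{align*}
e_t=(1-\rho)e_{t-1}+\rho\,(g_{i,t}-\nabla_t)+(1-\rho)(\nabla_{t-1}-\nabla_t),
\end{align*}
and unroll it to
\begin{align*}
e_t=(1-\rho)^t e_0+\rho\sum_{s=1}^t(1-\rho)^{t-s}\xi_s+\sum_{s=1}^t(1-\rho)^{t-s+1}(\nabla_{s-1}-\nabla_s),
\end{align*}
where $\xi_s:=g_{i,s}-\nabla_s$. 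We then bound $\mathbb{E}\|e_t\|$ by the triangle inequality, treating each of the three terms separately, and finally sum over $t$.

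\textbf{Bias term.} Split $\xi_s=(g_{i,s}-\mathbb{E}_s g_{i,s})+(\mathbb{E}_s g_{i,s}-\nabla_s)$. By Lemma \ref{f_del_dif_lem}, together with the zero-mean property $\mathbb{E}[v]=0$ (which removes the $\Psi_{i,t-1}(u_{i,t-1},y_{i,t-1})$ baseline since it is independent of $v_{i,t}$), the conditional mean of $g_{i,t}$ equals $\nabla_t$ up to the discrepancy between the measured $\Psi_{i,t}(u_{i,t},y_{i,t})$ and the steady-state value $\tilde\Psi_{i,t}(u_{i,t})$. This discrepancy is at most $G\|y_{i,t}-h_i(u_{i,t},d)\|$ times $p/\delta$. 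We plan to bound it by a constant (using $\|y\|\le C_y$), so that it contributes to the same order as the variance term, scaled by $p/\delta=pT^{b}$.

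\textbf{Variance term.} The martingale part has conditionally uncorrelated increments, so
\begin{align*}
\mathbb{E}\Big\|\rho\sum_s(1-\rho)^{t-s}(g_{i,s}-\mathbb{E}_sg_{i,s})\Big\|\le \rho\,\beta\Big(\sum_s(1-\rho)^{2(t-s)}\Big)^{1/2}\le \beta\sqrt{\rho}.
\end{align*}
Since $\beta=\mathcal{O}(1/\delta)=\mathcal{O}(T^{b})$, summing over $T$ steps gives $\mathcal{O}(T^{1+b-c/2})$. The initial term contributes $\sum_t(1-\rho)^t\|e_0\|\le \|e_0\|/\rho=\mathcal{O}(T^{c})$, which is absorbed into the other terms.

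\textbf{Drift term.} This is the main obstacle. By Cauchy--Schwarz,
\begin{align*}
\sum_{t}\Big\|\sum_{s\le t}(1-\rho)^{t-s+1}(\nabla_{s-1}-\nabla_s)\Big\|\le \sum_t\Big(\sum_s(1-\rho)^{2(t-s)}\Big)^{1/2}\Big(\sum_{s\le t}\|\nabla_{s-1}-\nabla_s\|^2\Big)^{1/2},
\end{align*}
which is at most $T\rho^{-1/2}\sqrt{D_T}$. The stated form $\mathcal{O}(T^{c+1/2})\sqrt{D_T}$ is looser than this in some regimes. To match it exactly, I would instead apply Cauchy--Schwarz over $t$ first, $\sum_t\|\cdot\|\le\sqrt{T}\,(\sum_t\|\cdot\|^2)^{1/2}$, then bound the weighted sum by $\rho^{-2}D_T$ via a Young-type convolution bound ($\ell^1$-norm of the kernel $\le 1/\rho$), giving $\sqrt{T}\,T^{c}\sqrt{D_T}$. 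Here $D_T$ is exactly the temporal variation defined before the statement, so no extra Lipschitz-gradient assumption is needed.

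\textbf{Remaining term.} The $\mathcal{O}(T^{1/2+b+c/2})$ term comes from the output-mismatch bias noted above, or from the residual cross term between noise and drift in the squared expansion. If the bias bound obtained earlier turns out to have this size, I will absorb it here. Collecting the three contributions yields the claimed rate.
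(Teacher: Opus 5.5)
Your linear unrolling of $e_t$ is correct, and the variance bound ($\beta\sqrt{\rho}$ per step via martingale orthogonality), the initialization bound ($\mathcal{O}(1/\rho)$) and the drift bound (Young's convolution inequality, giving $\sqrt{T}\rho^{-1}\sqrt{D_T}$) all hold. The bias paragraph, however, contains a step that fails.

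\textbf{The bias step.} Suppose each step carries a bias $b_s:=\mathbb{E}_s g_{i,s}-\nabla_s$ with $\|b_s\|\le B=\mathcal{O}(p/\delta)$. It enters $e_t$ as $\rho\sum_{s\le t}(1-\rho)^{t-s}b_s$, and these weights sum to $1-(1-\rho)^t$. There is no orthogonality to exploit, so this term is of order $B$ per step, not $B\sqrt{\rho}$. Summed over $t$ it is $\mathcal{O}(T^{1+b})$. That is superlinear and is not dominated by the right-hand side: take $D_T$ small, and note $T^{1+b}>T^{\frac12+b+\frac c2}$ since $c<1$. So your ``Remaining term'' paragraph cannot absorb it.

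\textbf{The fix.} The bias is in fact zero. In the model \eqref{y_plat_dy} one has $y_{i,t}=h_i(u_{i,t},d)$, which is the identity the paper itself uses to get $\|g_{i,t}\|\le\beta$. Hence $\Psi_{i,t}(u_{i,t},y_{i,t})=\tilde\Psi_{i,t}(u_{i,t})$. Your own remark that $\mathbb{E}[v_{i,t}]=0$ removes the baseline, which is determined by the perturbations up to time $t-1$, then gives $\mathbb{E}_t g_{i,t}=\nabla_t$ exactly. With $b_s\equiv0$ you obtain $\mathcal{O}(T^{1+b-\frac c2})+\mathcal{O}(T^{c})+\mathcal{O}(T^{c+\frac12})\sqrt{D_T}$. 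Since $T^{c}\le T^{\frac12+b+\frac c2}$, the statement follows without manufacturing the middle term.

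A minor point: your first drift bound $T^{1+\frac c2}\sqrt{D_T}$ is worse than the target for every $c\in(0,1)$, not only in some regimes. So the convolution step is required, not optional.

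\textbf{Comparison with the paper.} The paper works with the squared error instead. It introduces $\tilde g_{i,t}$ with baseline $\tilde\Psi_{i,t}(u_{i,t-1})$ and splits off $\rho(\tilde g_{i,t}-g_{i,t})$ by Young's inequality with $\tau=\rho/(2(1-\rho))$. It then removes cross terms by conditional expectation, obtaining a recursion with contraction factor $(1+\tau)(1-\rho)=1-\rho/2$ and drift weight of order $1/\rho$. Finally it sums and applies Cauchy--Schwarz over $t$. This yields the same drift term and an $\mathcal{O}(\sqrt{T/\rho})$ initialization term.

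Your decomposition is more transparent: noise, drift and initialization are separated exactly, and the $\sqrt{\rho}$ gain is visibly due to orthogonality. It also shows where the paper is fragile. Its pathwise bound $\rho^2\|\tilde g_{i,t}-g_{i,t}\|^2\le4\rho^2\beta^2$ is the same non-vanishing-bias issue. Summed over $t=1,\dots,T$ it contributes $8\rho\beta^2T$, whereas the displayed sum carries only $8\rho\beta^2$. After dividing by $\rho\tau'-\tau=\rho/2$ and applying Cauchy--Schwarz, this is $\mathcal{O}(\beta T)$.

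What rescues either argument is the following fact, again using $y_{i,t}=h_i(u_{i,t},d)$. The difference $g_{i,t}-\tilde g_{i,t}$ equals $\frac{p}{\delta}v_{i,t}$ times a quantity fixed before $v_{i,t}$ is drawn. So one should run the recursion with $g_{i,t}$ itself, rather than split this difference off pathwise.
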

\begin{proof}
    At first, we define an auxiliary gradient estimator $\tilde{g}_{i,t}=\frac{p v_{i,t}}{\delta}\left(\tilde{\Psi}_{i,t}(u_{i,t})-\tilde{\Psi}_{i,t}(u_{i,t-1})\right)$. By \eqref{input_ex_up} in the proposed algorithm and \eqref{psi_exp} in Lemma \ref{f_del_dif_lem}, the estimator $\tilde{g}_{i,t}$ satisfies $\mathbb{E}_{v_{i,[t]}}\left[\tilde{g}_{i,t}\right]=\nabla \tilde{\Psi}_{i,t,\delta}\left(w_{i,t}\right)$. 
\par Using the smoothing update \eqref{ait_up}, the term $\|\nabla \tilde{\Psi}_{i,t,\delta}(w_{i,t})-a_{i,t}\|$ admits the following bound:
\begin{align*}
    &\|\nabla \tilde{\Psi}_{i,t,\delta}(w_{i,t})-a_{i,t}\|\\
    \leq&\|\nabla \tilde{\Psi}_{i,t,\delta}(w_{i,t})-(1-\rho)a_{i,t-1}-\rho g_{i,t}+\rho\tilde{g}_{i,t}-\rho \tilde{g}_{i,t}\|\\
    \leq& \|\nabla \tilde{\Psi}_{i,t,\delta}(w_{i,t})-(1-\rho)a_{i,t-1}\!-\!\rho \tilde{g}_{i,t}\|+\!\|\rho(\tilde{g}_{i,t}\!-g_{i,t})\|.
\end{align*}
Squaring both sides of the above inequality yields
\begin{align}\label{square_nabla_ait_dif}
    &\|\nabla \tilde{\Psi}_{i,t,\delta}(w_{i,t})-a_{i,t}\|^2\notag\\
    \leq& (1+\tau)\|\nabla \tilde{\Psi}_{i,t,\delta}(w_{i,t})-(1-\rho)a_{i,t-1}-\rho \tilde{g}_{i,t}\|^2\notag\\
    &+(1+\frac{1}{\tau})\rho^2\|\tilde{g}_{i,t}-g_{i,t}\|^2\notag\\
     \leq&(1+\tau)\|\nabla \tilde{\Psi}_{i,t,\delta}(w_{i,t})-(1-\rho)a_{i,t-1}-\rho \tilde{g}_{i,t}\|^2\notag\\
     &+(1+\frac{1}{\tau})4\rho^2\beta^2,
\end{align}
where the parameter $\tau={\frac{\rho}{2(1-\rho)}}>0$ is chosen such that $\frac{\tau}{1+\tau}<\rho$.
\par For the term $\|\nabla \tilde{\Psi}_{i,t,\delta}(w_{i,t})-(1-\rho)a_{i,t-1}-\rho \tilde{g}_{i,t}\|^2$, by adding and subtracting the term $(1-\rho)\nabla \tilde{\Psi}_{i,t-1,\delta}(w_{i,t-1})$ within the norm and rearranging terms, we obtain
\begin{align*}
&\| \nabla \tilde{\Psi}_{i,t,\delta}(w_{i,t})-(1-\rho)a_{i,t-1}-\rho \tilde{g}_{i,t}\|^2 \notag\\
=& \left\|\rho\left(\nabla \tilde{\Psi}_{i,t,\delta}(w_{i,t})-\tilde{g}_{i,t}\right)\right\|^2+\left\|\rho'\tilde{D}_{i,t}\right\|^2 \notag\\
& +\left\|\rho'\left(\nabla \tilde{\Psi}_{i,t-1,\delta}(w_{i,t-1})-a_{i,t-1}\right)\right\|^2 \notag\\
& +2 \rho \rho'\left\langle\nabla \tilde{\Psi}_{i,t,\delta}(w_{i,t})-\tilde{g}_{i,t}, \tilde{D}_{i,t}\right\rangle \notag\\
&+2\left(\rho'\right)^2\left\langle\tilde{D}_{i,t}, \nabla \tilde{\Psi}_{i,t-1,\delta}(w_{i,t-1})-a_{i,t-1}\right\rangle \notag\\
&+2\left(\rho'\right)^2\!\left\langle \! \nabla \tilde{\Psi}_{i,t,\delta}(w_{i,t})\!-\!\tilde{g}_{i,t}, \nabla \tilde{\Psi}_{i,t-1,\delta}(w_{i,t-1}\!)\!-\!a_{i,t-1} \!\right\rangle,
\end{align*}
where $\tilde{D}_{i,t}=\nabla \tilde{\Psi}_{i,t,\delta}(w_{i,t})-\nabla \tilde{\Psi}_{i,t-1,\delta}(w_{i,t-1})$ and $\rho'=1-\rho$. Taking conditional expectation $\mathbb{E}_{v_{i,[t]}}[\cdot]$ and using $\mathbb{E}_{v_{i,[t]}}\left[\tilde{g}_{i,t}\right]= \nabla \tilde{\Psi}_{i,t,\delta}\left(w_{i,t}\right)$, we obtain
\begin{align}\label{exp_eq_1}
&\mathbb{E}_{v_{i,[t]}} \left[\| \nabla \tilde{\Psi}_{i,t,\delta}(w_{i,t})-(1-\rho)a_{i,t-1}-\rho \tilde{g}_{i,t}\|^2\right] \notag\\
= & \mathbb{E}_{v_{i,[t]}}\left[\left\|\rho\left(\nabla \tilde{\Psi}_{i,t,\delta}(w_{i,t})-\tilde{g}_{i,t}\right)\right\|^2 \right]+\left\|\rho'\tilde{D}_{i,t}\right\|^2 \notag\\
& +2\left(\rho'\right)^2\left\langle\tilde{D}_{i,t}, \nabla \tilde{\Psi}_{i,t-1,\delta}(w_{i,t-1})-a_{i,t-1}\right\rangle  \notag\\
& +\left\|\rho'\left(\nabla \tilde{\Psi}_{i,t-1,\delta}(w_{i,t-1})-a_{i,t-1}\right)\right\|^2.
\end{align}
By Young's inequality, it holds that 
\begin{align}\label{inn_prod_2}
&\left\langle\tilde{D}_{i,t}, \nabla \tilde{\Psi}_{i,t-1,\delta}(w_{i,t-1})-a_{i,t-1}\right\rangle \notag\\
\leq & \frac{\rho}{2}\left\|\nabla \tilde{\Psi}_{i,t-1,\delta}(w_{i,t-1})-a_{i,t-1}\right\|^2+\frac{1}{2 \rho}\left\|\tilde{D}_{i,t}\right\|^2 .
\end{align}
Substituting \eqref{inn_prod_2} into \eqref{exp_eq_1} and using $(1+\frac{1}{\rho})\rho'\leq \frac{1}{\rho}$, we obtain
\begin{align}\label{exp_final_dif}
&\mathbb{E}_{v_{i,[t]}} \left[\| \nabla \tilde{\Psi}_{i,t,\delta}(w_{i,t})-(1-\rho)a_{i,t-1}-\rho \tilde{g}_{i,t}\|^2\right] \notag\\
\leq & \rho^2 \mathbb{E}_{v_{i,[t]}}\left[\left\|\nabla \tilde{\Psi}_{i,t,\delta}(w_{i,t})-\tilde{g}_{i,t}\right\|^2 \right] \notag\\
& +\frac{\rho'}{\rho}\left\|\nabla \tilde{\Psi}_{i,t,\delta}(w_{i,t})-\nabla \tilde{\Psi}_{i,t-1,\delta}(w_{i,t-1})\right\|^2 \notag\\
& +\rho'\left\|\nabla \tilde{\Psi}_{i,t-1,\delta}(w_{i,t-1})-a_{i,t-1}\right\|^2.
\end{align}
Taking total expectation of \eqref{exp_final_dif} yields
\begin{align}\label{exp_fff_2}
    &\mathbb{E}\left[\|\nabla \tilde{\Psi}_{i,t,\delta}(w_{i,t})-(1-\rho)a_{i,t-1}-\rho \tilde{g}_{i,t}\|^2\right]\notag\\
    \leq &\rho^2 \mathbb{E}\left[\|\nabla \tilde{\Psi}_{i,t,\delta}(w_{i,t})-\tilde{g}_{i,t}\|^2\right]\notag\\
    &+\frac{1-\rho}{\rho}\mathbb{E}\left[\|\nabla \tilde{\Psi}_{i,t,\delta}(w_{i,t})-\nabla \tilde{\Psi}_{i,t-1,\delta}(w_{i,t-1})\|^2\right]\notag\\
    &+(1-\rho)\mathbb{E}\left[\|\nabla \tilde{\Psi}_{i,t-1,\delta}(w_{i,t-1})-a_{i,t-1}\|^2\right].
\end{align}
Furthermore, the variance term satisfies 
\begin{align}\label{var_ineq}
    &\mathbb{E}\left[\|\nabla \tilde{\Psi}_{i,t,\delta}(w_{i,t})-\tilde{g}_{i,t}\|^2\right]\notag\\
    =&\mathbb{E}\left[\|\frac{p v_{i,t}}{\delta}\left(\tilde{\Psi}_{i,t}(u_{i,t})-\tilde{\Psi}_{i,t}(u_{i,t-1})\right)\!-\!\nabla \tilde{\Psi}_{i,t,\delta}(w_{i,t})\|^2\right]\notag\\
    \leq& \mathbb{E}\left[\|\frac{p v_{i,t}}{\delta}\left(\tilde{\Psi}_{i,t}(u_{i,t})-\tilde{\Psi}_{i,t}(u_{i,t-1})\right)\|^2\right]\notag\\
    \leq &\frac{2M^2 p^2}{\delta^2}(\sigma_t^2 D^2+4\delta^2)\notag\\
    \leq &\frac{2M^2 p^2}{\delta^2}( D^2+4\delta^2)=\mathcal{O}(T^{2b}),
\end{align}
where the first inequality holds since the variance is bounded by the second moment, the second inequality follows from Assumptions \ref{psi_tilde_del} and \ref{con_set_ass} and {we use $\delta=T^{-b}$ in the last equality}.
\par Taking expectation of \eqref{square_nabla_ait_dif}, summing from $t=1$ to $T$ and using \eqref{exp_fff_2}, \eqref{var_ineq}, we have
\begin{align*}
    &\sum_{t=1}^T\mathbb{E}\left[\|\nabla \tilde{\Psi}_{i,t,\delta}(w_{i,t})-a_{i,t}\|^2\right]\\
    \leq& (1+\tau)\rho^2 \sum_{t=1}^T\mathbb{E}\left[\|\nabla \tilde{\Psi}_{i,t,\delta}(w_{i,t})-\tilde{g}_{i,t}\|^2\right]\\
    &+\!\frac{(1\!+\!\tau)(1\!-\!\rho)}{\rho}\!\sum_{t=1}^T\mathbb{E}\left[\|\nabla \tilde{\Psi}_{i,t,\delta}(w_{i,t})\!-\!\nabla \tilde{\Psi}_{i,t-1,\delta}(w_{i,t-1})\|^2\right]\\
    &+(1+\tau)(1-\rho)\sum_{t=1}^T\mathbb{E}\left[\|\nabla \tilde{\Psi}_{i,t-1,\delta}(w_{i,t-1})-a_{i,t-1}\|^2\right]\\
    &+{8\rho \beta^2}\\
    \leq& (1+\tau)\rho^2 \mathcal{O}(T^{1+2b})+\frac{(1+\tau)(1-\rho)}{\rho}D_T\\
    &+(1\!+\!\tau)(1\!-\!\rho)\sum_{t=0}^{T-1}\mathbb{E}\left[\|\nabla \tilde{\Psi}_{i,t,\delta}(w_{i,t})-a_{i,t}\|^2\right]+{8\rho \beta^2}, 
\end{align*}
where $D_T=\sum_{t=1}^T\|\nabla \tilde{\Psi}_{i,t,\delta}(w_{i,t})-\nabla \tilde{\Psi}_{i,t-1,\delta}(w_{i,t-1})\|^2$.
\par Rearranging terms and using $\|\nabla \tilde{\Psi}_{i,T,\delta}(w_{i,T})-a_{i,T}\|^2\geq (\rho(1+\tau)-\tau) \|\nabla \tilde{\Psi}_{i,T,\delta}(w_{i,T})-a_{i,T}\|^2$ since $\rho<1$, we get
\begin{align*}
    &(\rho(\tau+1)-\tau)\sum_{t=1}^T\mathbb{E}\left[\|\nabla \tilde{\Psi}_{i,t,\delta}(w_{i,t})-a_{i,t}\|^2\right]\\
    \leq &(1+\tau)\rho^2 \mathcal{O}(T^{1+2b})+\frac{(1+\tau)(1-\rho)}{\rho}D_T\\
    &+(1+\tau)(1-\rho)\|\nabla \tilde{\Psi}_{i,0,\delta}(w_{i,t})-a_{i,0}\|^2+{8\rho \beta^2}\\
    = &(1+\tau)\rho^2 \mathcal{O}(T^{1+2b})+\frac{(1+\tau)(1-\rho)}{\rho}D_T\\
    &+(1+\tau)(1-\rho)\|\nabla \tilde{\Psi}_{i,0,\delta}(w_{i,t})\|^2+{8\rho \beta^2},
\end{align*}
where the last equality is due to $a_{i,0}=0$.
Let $\tau'=1+\tau$. Dividing both sides by $\rho\tau'-\tau>0$ yields
\begin{align}\label{sum_grad_dif_exp}
   &\sum_{t=1}^T\mathbb{E}\left[\|\nabla \tilde{\Psi}_{i,t,\delta}(w_{i,t})-a_{i,t}\|^2\right]\notag\\
    \leq &\frac{\tau'\rho^2}{\rho\tau'-\tau} \mathcal{O}(T^{1+2b})+\frac{\tau'(1-\rho)}{\rho(\rho\tau'-\tau) }D_T \notag\\
    &+\frac{\tau'(1-\rho)}{\rho\tau'-\tau}\|\nabla \tilde{\Psi}_{i,0,\delta}(w_{i,t})\|^2+\frac{{8\rho \beta^2}}{\rho\tau'-\tau}.
\end{align}
Finally, by Cauchy–Schwarz, we have
\begin{align}\label{cs_ineq}
    &\sum_{t=1}^T\mathbb{E}\left[\|\nabla \tilde{\Psi}_{i,t,\delta}(w_{i,t})-a_{i,t}\|\right]\notag\\
    \leq &\sqrt{T\sum_{t=1}^T\mathbb{E}\left[\|\nabla \tilde{\Psi}_{i,t,\delta}(w_{i,t})-a_{i,t}\|^2\right]}.
    \end{align}
    It then follows from \eqref{sum_grad_dif_exp} and \eqref{cs_ineq} that
\begin{align*}
    &\sum_{t=1}^T\mathbb{E}\left[\|\nabla \tilde{\Psi}_{i,t,\delta}(w_{i,t})-a_{i,t}\|\right]\\
    \leq& \sqrt{\tau'}{T^{1+b}} \sqrt{\frac{\rho^2}{\rho\tau'-\tau}}+2\sqrt{2T} \beta  \sqrt{\frac{\rho}{\rho\tau'-\tau}}\\
    &+\sqrt{\tau'}\sqrt{\frac{T}{\rho\tau'-\tau}}\|\nabla \tilde{\Psi}_{i,0,\delta}(w_{i,t})\|\\
    &+\sqrt{\tau'}\sqrt{\frac{1}{\rho(\rho\tau'-\tau)}}\sqrt{TD_T}.
\end{align*}

By using $\rho=\frac{1}{T^c}$ and $\beta=\mathcal{O}(1/\delta)=\mathcal{O}(T^{b})$, it follows that
\begin{align*}
    &\sum_{t=1}^T\mathbb{E}\left[\|\nabla \tilde{\Psi}_{i,t,\delta}(w_{i,t})-a_{i,t}\|\right]\\
    =&\mathcal{O}(T^{1+b-\frac{c}{2}})+\mathcal{O}(T^{\frac{1}{2}+b+\frac{c}{2}})+\mathcal{O}(T^{c+\frac{1}{2}})\sqrt{D_T}.
    \end{align*}
This completes the proof. 
\end{proof}

\par Define $\bar{a}_t=\frac{1}{n}\sum_{i=1}^n a_{i,t}$. Then, using the definition of $\mathcal{P}_t$ in \eqref{D_t_def}, we obtain
\begin{align}\label{inn_prod_sum}
    &\sum_{j=1}^n\sum_{t=1}^T\nabla \tilde{\Psi}_{j,t,\delta}(w_{j,t})^\top (w_{j,t}-w_t^*)\notag\\
    =&\sum_{j=1}^n\sum_{t=1}^T  (\nabla \tilde{\Psi}_{j,t,\delta}(w_{j,t})-a_{j,t})^\top (w_{j,t}-w_t^*)\notag\\
    &+\sum_{j=1}^n \sum_{t=1}^T a_{j,t}^\top (w_{j,t}-\bar{w}_t)+n\sum_{t=1}^T \bar{a}_t^\top (\bar{w}_t-w_t^*)\notag\\
    \leq& \sum_{j=1}^n\sum_{t=1}^T \mathcal{P}_t+n\beta \left(\frac{8}{3}D T^{3/4}+\eta \alpha' \beta  T\right)\notag\\
    &+n\sum_{t=1}^T\bar{a}_t^\top (\bar{w}_t-w_t^*),
\end{align}
where the last inequality follows from the consensus error bound in \eqref{consensus_inter_result} and the boundedness of $a_{i,t}$.
\par To further bound the term $\sum_{t=1}^T\bar{a}_t^\top (\bar{w}_t-w^*)$, we introduce the following lemma.
\begin{lemma}\cite{ShalevShwartz2011}[Lemma 2.3]\label{f_ref_lem}
    Let $\widehat{w}_t^*=$ $\underset{w \in \Omega}{\operatorname{argmin}} \sum_{i=1}^{t-1} f_i(w)+\mathcal{R}(w), \forall t \in[T]$, where $\mathcal{R}(w)$ is a strongly convex regularization function. Then, $\forall w \in \Omega$, it holds that

$$
\begin{aligned}
& \sum_{t=1}^T\left(f_t\left(\widehat{w}_t^*\right)-f_t(w)\right) \\
\leq & \mathcal{R}(w)-\mathcal{R}\left(\widehat{w}_1^*\right)+\sum_{t=1}^T\left(f_t\left(\widehat{w}_t^*\right)-f_t\left(\widehat{w}_{t+1}^*\right)\right).
\end{aligned}
$$
\end{lemma}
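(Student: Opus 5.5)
The statement is the standard "follow-the-leader / be-the-leader" bound. I would prove it by induction on $T$ for the auxiliary claim
\begin{align*}
\sum_{t=1}^T f_t(\widehat{w}_{t+1}^*) \le \sum_{t=1}^T f_t(w) + \mathcal{R}(w) - \mathcal{R}(\widehat{w}_1^*), \quad \forall w\in\Omega.
\end{align*}
Once this holds, adding $\sum_{t=1}^T\big(f_t(\widehat{w}_t^*)-f_t(\widehat{w}_{t+1}^*)\big)$ to both sides and rearranging yields the claimed inequality directly. Strong convexity of $\mathcal{R}$ together with convexity and compactness of $\Omega$ (Assumption \ref{con_set_ass}) is used only to guarantee that each minimizer $\widehat{w}_t^*$ exists and is unique, so that the argmin is well defined. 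I assume the $f_t$ are convex, or at least lower semicontinuous, so that $\sum_{i<t} f_i+\mathcal{R}$ attains its minimum.

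\textbf{Base case $T=0$.} Here the claim reads $0 \le \mathcal{R}(w)-\mathcal{R}(\widehat{w}_1^*)$. This holds because $\widehat{w}_1^*=\operatorname{argmin}_{w\in\Omega}\mathcal{R}(w)$, the sum being empty.

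\textbf{Inductive step.} Assume the claim for $T-1$ and apply it with the particular choice $w=\widehat{w}_{T+1}^*$:
\begin{align*}
\sum_{t=1}^{T-1} f_t(\widehat{w}_{t+1}^*) \le \sum_{t=1}^{T-1} f_t(\widehat{w}_{T+1}^*) + \mathcal{R}(\widehat{w}_{T+1}^*) - \mathcal{R}(\widehat{w}_1^*).
\end{align*}
Add $f_T(\widehat{w}_{T+1}^*)$ to both sides. The right-hand side becomes $\sum_{t=1}^T f_t(\widehat{w}_{T+1}^*)+\mathcal{R}(\widehat{w}_{T+1}^*)-\mathcal{R}(\widehat{w}_1^*)$. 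Since $\widehat{w}_{T+1}^*$ minimizes $\sum_{t=1}^T f_t+\mathcal{R}$ over $\Omega$, this is at most $\sum_{t=1}^T f_t(w)+\mathcal{R}(w)-\mathcal{R}(\widehat{w}_1^*)$ for every $w\in\Omega$, which closes the induction.

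The only delicate point is the bookkeeping of indices: $\widehat{w}_t^*$ uses $f_1,\dots,f_{t-1}$, so the be-the-leader iterate at round $t$ is $\widehat{w}_{t+1}^*$. I also need to check that the constant $-\mathcal{R}(\widehat{w}_1^*)$ propagates unchanged through the induction, which it does because it is carried along on both sides. No further obstacle is expected.
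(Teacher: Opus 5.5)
Your proof is correct and is essentially the argument behind the cited result: the paper gives no proof of its own and defers to Lemma 2.3 of Shalev-Shwartz. That proof views the regularized leader as follow-the-leader on $f_0=\mathcal{R},f_1,\dots,f_T$ and runs the same be-the-leader induction, with the hypothesis applied at $w=\widehat{w}_{T+1}^*$, so you have simply inlined that reduction (and, as you note, strong convexity only makes the argmin well posed, since the inequality holds for any choice of minimizers).
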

Applying Lemma \ref{f_ref_lem}, we establish an upper bound on the last term in \eqref{inn_prod_sum} as follows.
\begin{proposition}\label{at_in_bound_lem} 
Suppose Assumption \ref{con_set_ass} holds. Then, we have
    \begin{align*}
    \sum_{t=1}^T \bar{a}_t^\top(\bar{w}_t-w_t^*)\leq \frac{D^2}{\eta}+\eta T \beta^2.
\end{align*}

\end{proposition}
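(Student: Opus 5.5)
The plan is to recognize $\bar{w}_t$ as a follow-the-regularized-leader (FTRL) iterate driven by linear losses, and then apply Lemma \ref{f_ref_lem}.

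\textbf{Step 1: identify $\bar d_t$.} Average the update \eqref{dit_up_step} over $i$. Because $A$ is doubly stochastic (Assumption \ref{net_assum}), this gives
\begin{align*}
\bar d_{t+1}=\bar d_t+\bar a_t .
\end{align*}
With $d_{i,1}=0$ this yields $\bar d_t=\sum_{s=1}^{t-1}\bar a_s$.

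\textbf{Step 2: rewrite $\bar w_t$.} By \eqref{bar_W_def},
\begin{align*}
\bar w_t=\mathrm{argmin}_{w\in\Omega}\sum_{s=1}^{t-1} f_s(w)+\mathcal R(w),\qquad f_s(w)=\eta\,\bar a_s^\top w,\quad \mathcal R(w)=\|w\|^2 .
\end{align*}
This is exactly $\widehat w_t^*$ in Lemma \ref{f_ref_lem}, with a $2$-strongly convex regularizer.

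\textbf{Step 3: apply Lemma \ref{f_ref_lem}.} For a comparator $w\in\Omega$, the lemma gives
\begin{align*}
\eta\sum_{t=1}^T \bar a_t^\top(\bar w_t-w)\le \|w\|^2-\|\bar w_1\|^2+\eta\sum_{t=1}^T \bar a_t^\top(\bar w_t-\bar w_{t+1}).
\end{align*}

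\textbf{Step 4: bound the regularizer gap.} Since $\bar d_1=0$, $\bar w_1$ is the minimum-norm point of $\Omega$. I would use the normalization $0\in\Omega$ (or shift coordinates so that this holds). Then $\|w\|^2-\|\bar w_1\|^2\le D^2$ by Assumption \ref{con_set_ass}.

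\textbf{Step 5: bound the stability term.} Lemma \ref{pro_lem} gives $\|\bar w_t-\bar w_{t+1}\|\le\eta\|\bar d_t-\bar d_{t+1}\|=\eta\|\bar a_t\|\le\eta\beta$. Here $\|\bar a_t\|\le\beta$ because each $\|a_{i,t}\|\le\beta$. By Cauchy--Schwarz, each summand $\eta\,\bar a_t^\top(\bar w_t-\bar w_{t+1})$ is at most $\eta^2\beta^2$. Dividing the whole inequality by $\eta$ gives $D^2/\eta+\eta T\beta^2$.

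\textbf{Main obstacle: the time-varying comparator.} Lemma \ref{f_ref_lem} is stated for a single fixed $w$, whereas the proposition has $w_t^*$ varying with $t$. Following the way the paper invokes the lemma, I would instantiate the comparator as $w=w_t^*$ in Step 3. The right-hand side of the resulting bound depends on the comparator only through $\|w\|^2\le D^2$, so the claimed bound is obtained in that form.

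To make this rigorous, I would redo the FTRL telescoping of Lemma \ref{f_ref_lem} with round-dependent comparators. This produces extra terms
\begin{align*}
\eta\sum_{t=1}^{T}\bar d_{t+1}^\top(w_{t+1}^*-w_t^*),
\end{align*}
of order $\eta\beta t\|w_{t+1}^*-w_t^*\|$, which the stated bound does not contain. Closing this gap therefore needs either a fixed comparator or an additional path-length or variation argument. This is the step I would scrutinize first, and I flag it as possibly requiring such an extra term beyond the stated bound.
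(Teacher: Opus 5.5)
\paragraph{Comparison with the paper's proof.} Your Steps 1--5 are the paper's proof. The paper also:
\begin{itemize}
\item uses $\bar d_{t+1}=\bar d_t+\bar a_t$;
\item reads $\bar w_t$ as the FTRL iterate for the linear losses $\bar a_t^\top w$ with regularizer $\|w\|^2/\eta$, and applies Lemma~\ref{f_ref_lem};
\item controls the stability term through $\|\bar w_t-\bar w_{t+1}\|\le\eta\|\bar a_t\|$. It derives this from the $2$-strong convexity of $\bar F_{t+1}$ rather than from Lemma~\ref{pro_lem}, which makes no difference.
\end{itemize}
For a single fixed comparator $w\in\Omega$ this correctly gives $D^2/\eta+\eta T\beta^2$, provided $0\in\Omega$. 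The paper assumes this implicitly when it sets $\mathcal R(\bar w_1)=0$. Your alternative of ``shifting coordinates'' is not a free normalization, because the regularizer $\|w\|^2$ in $F_{i,t}$ is anchored at the origin. Without $0\in\Omega$ you pick up an extra $2D\|\bar w_1\|/\eta$.

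\paragraph{The time-varying comparator.} The gap is the one you flag, and your diagnosis is right: substituting $w=w_t^*$ round by round in Lemma~\ref{f_ref_lem} is not legitimate. The paper does not resolve this either. It sets $\hat w=\mathrm{argmin}_{w\in\Omega}\sum_{t=1}^T\bar a_t^\top w$ and asserts $\sum_t\bar a_t^\top(\bar w_t-w_t^*)\le\sum_t\bar a_t^\top(\bar w_t-\hat w)$. That is equivalent to $\sum_t\bar a_t^\top w_t^*\ge\min_{w\in\Omega}\sum_t\bar a_t^\top w$. This holds for a constant comparator but generally fails for a time-varying one, since a sum of per-round minima is at most the minimum of the sum.

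In fact the FTRL structure together with $\|\bar a_t\|\le\beta$ cannot give the stated bound. Take $p=1$, $\Omega=[-1,1]$ (so $D=2$), $\bar a_t=(-1)^{t+1}\beta$ and $w_t^*=-\mathrm{sign}(\bar a_t)$.
\begin{itemize}
\item Then $\bar d_t\in\{0,\beta\}$, and $\bar w_t\in\{0,-\eta\beta/2\}$ whenever $\eta\beta\le2$.
\item Every summand is at least $\beta$, so the left-hand side is at least $\beta T$.
\item With $\eta=T^{-3/4}$, the right-hand side is $4T^{3/4}+\beta^2T^{1/4}=o(T)$.
\end{itemize}

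\paragraph{What can actually be proved.} The honest telescoping gives exactly your extra term. After dividing by $\eta$ it is $\sum_{t=1}^{T-1}\bar d_{t+1}^\top(w_{t+1}^*-w_t^*)\le T\beta\sum_{t=1}^{T-1}\|w_{t+1}^*-w_t^*\|$. So you have two options:
\begin{itemize}
\item state the proposition for a fixed comparator, which your Steps 1--5 prove; or
\item keep $w_t^*$ and add this path-length term. It is not controlled by $D_T$, so it would have to be carried into Theorem~\ref{reg_them}.
\end{itemize}
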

\begin{proof}
{Define $\hat{w}=\arg\min_{w\in \Omega}\sum_{t=1}^T \bar{a}_t^\top w$. Then, it holds that $\sum_{t=1}^T \bar{a}_t^\top(\bar{w}_t-w_t^*)\leq \sum_{t=1}^T \bar{a}_t^\top (\bar{w}_t-\hat{w})$.
}
By \eqref{dit_up_step} in the proposed algorithm, we have $\bar{d}_{t+1}=\bar{d}_t+\bar{a}_t$. Note that $\bar{w}_{t+1}$ is given by $\bar{w}_{t+1}=\text{argmin}_{w\in \Omega} \eta \bar{d}_{t+1}^\top w+\|w\|^2$. Applying Lemma \ref{f_ref_lem} to the linear loss sequence $\{\bar{a}_t^\top w\}_{t=1}^T$ with the regularizer $\mathcal{R}(w)=\frac{\|w\|^2}{\eta}$, we obtain 
\begin{align*}
    \sum_{t=1}^T \bar{a}_t^\top(\bar{w}_t-w_t^*)\leq& \frac{\|\hat{w}\|^2}{\eta}-0+\sum_{t=1}^T \bar{a}_t^\top(\bar{w}_t-\bar{w}_{t+1})\\
    \leq&\frac{D^2}{\eta}+\sum_{t=1}^T \|\bar{a}_t\|\|\bar{w}_t-\bar{w}_{t+1}\|,
\end{align*}
where the last inequality follows from the boundedness of $\Omega$ and the Cauchy-Schwarz inequality.
\par By \eqref{bar_W_def}, it is straightforward to verify that $\bar{F}_t(w)$ is $2$-strongly convex and $\nabla \bar{F}_{t+1}(\bar{w}_{t+1})=0$, then one has 
\begin{align*}
    \|\bar{w}_t-\bar{w}_{t+1}\|^2\leq&\bar{F}_{t+1}(\bar{w}_t)-\bar{F}_{t+1}(\bar{w}_{t+1})\\
    =&\bar{F}_t(\bar{w}_t)+\eta \bar{a}_t^\top \bar{w}_t-\bar{F}_t(\bar{w}_{t+1})-\eta \bar{a}_t^\top \bar{w}_{t+1}\\
    =&\bar{F}_t(\bar{w}_t)-\bar{F}_t(\bar{w}_{t+1})+\eta \bar{a}_t^\top(\bar{w}_t-\bar{w}_{t+1})\\
    \leq& \eta \|\bar{a}_t\|\|\bar{w}_t-\bar{w}_{t+1}\|,
\end{align*}
which implies $\|\bar{w}_t-\bar{w}_{t+1}\|\leq \eta \|\bar{a}_t\|$.  
Combining this result with the boundedness condition $\|\bar{a}_t\|\leq \beta$, we obtain the desired result.
\end{proof}
Building on the upper bounds established in Propositions \ref{sum_norm_sub_lem} and \ref{at_in_bound_lem}, we now derive the dynamic regret bound for the proposed DPFOA algorithm.
\begin{theorem}\label{reg_them}
    Suppose Assumptions \ref{func_assum}-\ref{net_assum} hold. If the stepsizes are chosen as $\sigma_t=\frac{1}{\sqrt{t}}$,  $\rho=T^{-2/5}$, $\eta=T^{-3/4}$ and $\delta=T^{-1/10}$, then we have
    \begin{align}
        &\sum_{j=1}^n\sum_{t=1}^T \left(\mathbb{E}[\tilde{\Psi}_{j,t}(u_{i,t})]- \tilde{\Psi}_{j,t}(u_t^*)\right)\notag\\
        =& \mathcal{O}( T^{\frac{9}{10}}(1+
        \sqrt{D_T})+ T^{7/10}).
    \end{align}
\end{theorem}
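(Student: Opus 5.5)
The plan is to chain the decomposition already built in \eqref{Psi_ineq}, \eqref{Psi_dif_2} and \eqref{inn_prod_sum}, and then substitute the stepsizes. First sum \eqref{Psi_ineq} over $j\in\mathcal{V}$ and take expectations. This bounds the regret $\mathcal{R}_{i,T}$ by $\sum_{j}\sum_t \mathbb{E}[\tilde{\Psi}_{j,t,\delta}(w_{i,t})-\tilde{\Psi}_{j,t,\delta}(w_t^*)]+3n\delta MT$. Next apply \eqref{Psi_dif_2} for each $j$. Convexity of $\tilde{\Psi}_{j,t,\delta}$ (inherited from Assumption \ref{psi_tilde_del}, since smoothing preserves convexity and the Lipschitz constant) and the consensus bound \eqref{consensus_inter_result} reduce the problem to the linearized term $\sum_j\sum_t \nabla\tilde{\Psi}_{j,t,\delta}(w_{j,t})^\top(w_{j,t}-w_t^*)$, plus $2nM(\tfrac83 DT^{3/4}+\eta\alpha'\beta T)$.

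Then split the linearized term via \eqref{inn_prod_sum} into three pieces.
\begin{itemize}
\item The gradient-tracking error $\sum_j\sum_t\mathcal{P}_t\le D\sum_j\sum_t\|\nabla\tilde{\Psi}_{j,t,\delta}(w_{j,t})-a_{j,t}\|$. In expectation this is controlled by Proposition \ref{sum_norm_sub_lem}.
\item The consensus piece $n\beta(\tfrac83 DT^{3/4}+\eta\alpha'\beta T)$.
\item The follow-the-regularized-leader piece $n\sum_t\bar a_t^\top(\bar w_t-w_t^*)\le n(D^2/\eta+\eta T\beta^2)$, by Proposition \ref{at_in_bound_lem}.
\end{itemize}

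Collecting everything, the bound has the form
\begin{align*}
\mathcal{R}_{i,T}\lesssim{}& \delta T+(1+\beta)\bigl(T^{3/4}+\eta\beta T\bigr)+\frac{1}{\eta}+\eta T\beta^2\\
&+T^{1+b-c/2}+T^{1/2+b+c/2}+T^{c+1/2}\sqrt{D_T},
\end{align*}
with constants depending on $n,M,D,p,G,V,\sigma_2(A)$. Substitute $\delta=T^{-1/10}$ (so $b=1/10$ and $\beta=\mathcal{O}(T^{1/10})$), $\rho=T^{-2/5}$ ($c=2/5$) and $\eta=T^{-3/4}$, then read off exponents:
\begin{itemize}
\item $\delta T=T^{9/10}$;
\item $\beta T^{3/4}=T^{17/20}$;
\item $\eta\beta^2T=T^{9/20}$;
\item $1/\eta=T^{3/4}$;
\item $T^{1+b-c/2}=T^{9/10}$;
\item $T^{1/2+b+c/2}=T^{4/5}$;
\item $T^{c+1/2}\sqrt{D_T}=T^{9/10}\sqrt{D_T}$.
\end{itemize}
All of these are dominated by $T^{9/10}(1+\sqrt{D_T})+T^{7/10}$, which gives the claim. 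The $T^{7/10}$ term accommodates lower-order pieces such as the initial-gradient term in \eqref{sum_grad_dif_exp}, which scales like $\sqrt{T/\rho}$ times a constant.

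The main obstacle I expect is compatibility of the stepsizes with the hypothesis of Lemma \ref{h_it_lem}, namely $\eta\alpha\beta\sqrt{h_{i,t+1}}\le\sigma_t^2D^2$. With $\beta\sim T^{1/10}$ growing, the fixed choice $\eta=T^{-3/4}$ must still satisfy this condition for all $t\le T$. To handle it, I would use the remark after Lemma \ref{h_it_lem}: $\sqrt{h_{i,t+1}}\le 2D$ and $\sigma_t^2\ge 1/T$, so the requirement becomes $\eta\lesssim 1/(\beta T)$. If the exponent budget fails, I would scale $\eta$ by the constant $\bigl(1-\sigma_2(A)\bigr)/\bigl(2(\sqrt n+1+(\sqrt n-1)\sigma_2(A))\beta\bigr)$, absorbing the $\beta$ dependence. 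In that case the terms $1/\eta$ and $\beta T^{3/4}$ would need re-checking against the target $T^{9/10}$; the remaining steps are routine bookkeeping.
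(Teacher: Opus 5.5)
Your reduction is the paper's own proof. You sum \eqref{Psi_ineq} over $j$, pass through \eqref{Psi_dif_2} and \eqref{inn_prod_sum}, invoke Propositions~\ref{sum_norm_sub_lem} and~\ref{at_in_bound_lem}, and read off exponents with $b=1/10$, $c=2/5$. Your exponent table is correct. Your reading of $T^{7/10}$ is also right: since $\rho\tau'-\tau=\rho/2$, the initial-gradient term is of order $\sqrt{T/\rho}=T^{7/10}$, which is in any case dominated by $T^{9/10}$.

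The obstacle you flag is real, and the paper's proof does not address it. It uses \eqref{consensus_inter_result} with $\eta=T^{-3/4}$ without checking the hypothesis of Lemma~\ref{h_it_lem}.

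Your first attempt at the check cannot work. Bounding $\sqrt{h_{i,t+1}}\le 2D$ forces $\eta\le D/(2\alpha\beta T)\asymp T^{-11/10}$, and then $D^2/\eta$ from Proposition~\ref{at_in_bound_lem} is superlinear.

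The check must use the inductive bound itself. Put $\epsilon=\eta\alpha\beta$. The recursion reads $h_{i,t+1}-\epsilon\sqrt{h_{i,t+1}}\le(1-\sigma_t)h_{i,t}+\sigma_t^2D^2$, and the left side is increasing in $\sqrt{h_{i,t+1}}$ on $[\epsilon/2,\infty)$. Plug in $h_{i,t}\le 4D^2\sigma_t$ and use $4(\sigma_t-\sigma_{t+1})\le 2\sigma_t^2$. You then get $h_{i,t+1}\le 4D^2\sigma_{t+1}$ whenever $2\epsilon D\sigma_{t+1}^{1/2}\le D^2\sigma_t^2$. This holds for all $t<T$ provided $\eta\le D/(2\alpha\beta T^{3/4})$.

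Because $\beta=\Theta(T^{1/10})$, the stated $\eta=T^{-3/4}$ violates this condition for large $T$. In that case the recursion only certifies $\sqrt{h_{i,t}}$ of order $\epsilon\sqrt{t}$ near $t=T$, which would push $n\beta\sum_t\|w_{j,t}-\bar w_t\|$ up to order $T^{19/20}$.

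So your fallback is not optional. You must take $\eta=D/(2\alpha\beta T^{3/4})\asymp T^{-17/20}$, which is exactly the choice in the remark after Lemma~\ref{h_it_lem}. Note that your ``constant'' rescaling is really $\propto T^{-1/10}$.

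The re-check you left open then goes through:
\begin{itemize}
\item $D^2/\eta\asymp\beta T^{3/4}=T^{17/20}$;
\item $\eta T\beta^2\asymp T^{7/20}$;
\item $(n\beta+2nM)\eta\alpha'\beta T\asymp T^{7/20}$.
\end{itemize}
Nothing else changes, since Proposition~\ref{sum_norm_sub_lem} does not involve $\eta$. Hence the rate $\mathcal{O}(T^{9/10}(1+\sqrt{D_T}))$ holds, but for $\eta\propto\beta^{-1}T^{-3/4}$ rather than the literal $\eta=T^{-3/4}$ in the statement. Writing this out closes the gap in both your argument and the paper's.
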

\begin{proof}
    Combining \eqref{Psi_ineq}, \eqref{Psi_dif_2} and \eqref{inn_prod_sum} yields
    \begin{align}\label{sum_ineq_1}
        &\sum_{j=1}^n\sum_{t=1}^T \left(\tilde{\Psi}_{j,t}(u_{i,t})- \tilde{\Psi}_{j,t}(u_t^*)\right)\notag\\
        \leq& \sum_{j=1}^n\sum_{t=1}^T \left(\tilde{\Psi}_{j,t,\delta}(w_{i,t})-\tilde{\Psi}_{j,t,\delta}(w_t^*)\right)+3 n \delta M T\notag\\
        \leq & \sum_{j=1}^n\sum_{t=1}^T \nabla \tilde{\Psi}_{j,t,\delta}(w_{j,t})^\top(w_{j,t}-w_t^*)+3 n \delta M T \notag\\
        &+2n M(\frac{8}{3}D T^{3/4}+\eta \alpha' \beta  T)\notag\\
        \leq & \sum_{j=1}^n \sum_{t=1}^T  \mathcal{P}_t+n\sum_{t=1}^T\bar{a}_t^\top (\bar{w}_t-w_t^*)\notag\\
        &+(n\beta+2n M)(\frac{8}{3}D T^{3/4}+\eta \alpha' \beta  T)+3 n \delta M T \notag\\
        \leq & \sum_{j=1}^n D\sum_{t=1}^T \|\nabla \tilde{\Psi}_{j,t,\delta}(w_{j,t})-a_{j,t}\| +n\sum_{t=1}^T\bar{a}_t^\top (\bar{w}_t-w_t^*)\notag\\
        &+(n\beta+2n M)(\frac{8}{3}D T^{3/4}+\eta \alpha' \beta  T)+3 n \delta M T.
    \end{align}

    Taking expectation of \eqref{sum_ineq_1} and using Proposition \ref{sum_norm_sub_lem}, we have 
    \begin{align}
        &\sum_{j=1}^n\sum_{t=1}^T \left(\mathbb{E}[\tilde{\Psi}_{j,t}(u_{i,t})]- \tilde{\Psi}_{j,t}(u_t^*)\right)\notag\\
     \leq & nD\left(\mathcal{O}(T^{1+b-\frac{c}{2}})+\mathcal{O}(T^{\frac{1}{2}+b+\frac{c}{2}})+\mathcal{O}(T^{c+\frac{1}{2}})\sqrt{D_T}\right)\notag\\
     &+n\sum_{t=1}^T\bar{a}_t^\top (\bar{w}_t-w_t^*)+3 n \delta M T\notag\\
     &+(n\beta+2n M)(\frac{8}{3}D T^{3/4}+\eta \alpha' \beta  T).
    \end{align}
    Then, further using Proposition \ref{at_in_bound_lem}, $\eta=T^{-3/4}$, $\delta=T^{-b}$, and $\rho=T^{-c}$, we obtain
    \begin{align}\label{sum_fin_ineq}
        &\sum_{j=1}^n\sum_{t=1}^T \left(\mathbb{E}[\tilde{\Psi}_{j,t}(u_{i,t})]- \tilde{\Psi}_{j,t}(u_t^*)\right)\notag\\
     \leq & nD\left(\mathcal{O}(T^{1+b-\frac{c}{2}})+\mathcal{O}(T^{\frac{1}{2}+b+\frac{c}{2}})+\mathcal{O}(T^{c+\frac{1}{2}})\sqrt{D_T}\right)\notag\\
     &+n\left(\frac{D^2}{\eta}+\eta T \beta^2\right)+3 n \delta M T\notag\\
     &+(n\beta+2n M)(\frac{8}{3}D T^{3/4}+\eta \alpha' \beta  T)\notag\\
     =&nD\left(\mathcal{O}(T^{1+b-\frac{c}{2}})+\mathcal{O}(T^{\frac{1}{2}+b+\frac{c}{2}})+\mathcal{O}(T^{c+\frac{1}{2}})\sqrt{D_T}\right)\notag\\
     &+n D^2 T^{3/4}+n\beta^2 T^{1/4}+3 n M T^{1-b}\notag\\
     &+(n\beta+2n M)(\frac{8}{3}D T^{3/4}+\alpha'\beta T^{1/4}).
    \end{align}
    To guarantee that the dynamic regret grows sublinearly with respect to the time horizon $T$ (up to the non-stationarity measure $D_T$), we set $b=1/10$ and $c=2/5$,
    \begin{align}
        &\sum_{j=1}^n\sum_{t=1}^T \left(\mathbb{E}[\tilde{\Psi}_{j,t}(u_{i,t})]- \tilde{\Psi}_{j,t}(u_t^*)\right)\notag\\
        =& \mathcal{O}( T^{\frac{9}{10}}(1+
        \sqrt{D_T})+ T^{7/10}).
    \end{align}
    This establishes a sublinear dynamic regret bound with explicit dependence on the temporal variation metric $D_T$. This completes the proof.
\end{proof}
\begin{remark}
The stepsizes in Theorem \ref{reg_them} are chosen in terms of the time horizon $T$ for notational simplicity. In practice, a horizon-free implementation can be obtained using the standard doubling trick technique with periodic restarts \cite{ShalevShwartz2011}. Under these modifications, the same order of regret bounds up to constant factors can be achieved.
\end{remark}

\section{Numerical simulation}\label{simulation}
In this section, we evaluate the performance of the proposed feedback DPFOA and compare it with existing works. Consider an undirected and connected multi-agent network with ten agents, i.e. $n=10$. The dynamics of agent $i$ are described by
\begin{align}\label{dynam_simu}
    x_{i,t+1}=&A_i x_{i,t}+B_i u_{i,t}+d_{i,x}\notag\\
    y_{i,t}=&C_i x_{i,t}+d_{i,y}.
\end{align}
where the disturbances $d_{i,x}$ and $d_{i,y}$ are independently generated from standard normal distributions. 
For simplicity, we consider three type of agents with different $A_i$, $B_i$ and $C_i$. Each agent is associated with a time-varying private loss functions $f_{i,t}$ and all agents cooperate to solve the following distributed online optimization problem:
\begin{align}\label{simu_pro}
    \min_{u,y} f_t(u,y)=&\sum_{i=1}^n f_{i,t}(u_i,y_i)\notag\\
    f_{i,t}(u_{i},y_{i})=&a_{i,t}\|u_{i}\|^2+b_{i,t}\|y_{i}-c_{t}\|^2,\notag\\
    \text{s.t.} \qquad y=&h(u), \ u_i \in \mathcal{U},
\end{align}
where $u_i$ and $y_i$ are the input and output of the dynamic system of agent $i$ in \eqref{dynam_simu}. The coefficients $a_{i,t}$ and $b_{i,t}$ are independently sampled from a standard uniform distribution. The mapping $h(u)$ is the steady-state input-output relationship of the system \eqref{dynam_simu}, and the input constraint set is set as $\mathcal{U}=\{u|\|u\|_1\leq 2\}$. The objective of all agents is to cooperatively drive the system output to track a moving target, whose position $c_t$ evolves according to
$$c_{t+1}=c_{t}+
\frac{\sin \frac{t}{100}}{10 t} ,$$
with $c_{0}=0$. At each time step $t$, every agent $i$ generates an input $u_{i,t}$ and observes its local cost function value $f_{i,t}$. All agents are initialized with the same feasible input $u_{i,0} \in \mathcal{U}$.
\begin{figure}
    \centering
    \includegraphics[width=7cm]{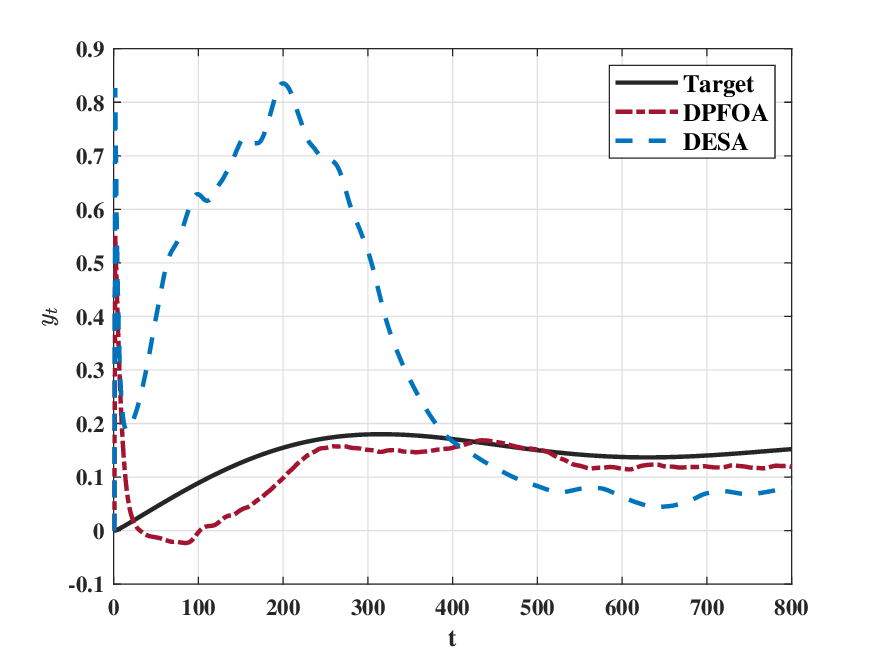}
    \caption{The paths of $c_{t}$ and $\bar{y}_t$ generated by DPFOA and DESA.}
    \label{traj_fig}
\end{figure}
\begin{figure}
    \centering
    \includegraphics[width=7cm]{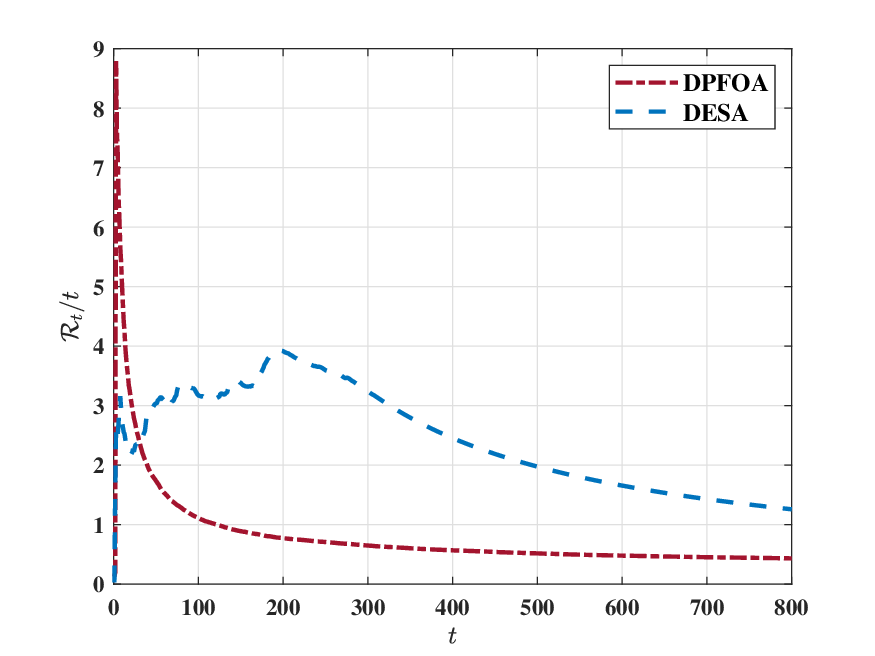}
    \caption{The regret trajectory of  DPFOA and DESA.}
    \label{reg_fig}
\end{figure}
\par For performance comparison, we apply both the proposed DPFOA and a distributed version of the stochastic extremum-seeking algorithm (DESA) adapted from \cite{ES_alg} to problem \eqref{simu_pro}. Since the ES algorithm in \cite{ES_alg} was originally designed for a centralized setting, we extend it to the distributed framework by incorporating an average-consensus step. Both algorithms operate as model-free feedback frameworks. Fig. \ref{traj_fig} illustrates the target trajectory $c_{t}$ along with the trajectories of average output $\bar{y}_t=\frac{1}{n}\sum_{i=1}^n y_{i,t}$ for both algorithms. It is observed that both algorithms are capable of tracking the moving target asymptotically. However, the trajectory generated by DPFOA converges faster and remains closer to the true trajectory throughout the tracking process, while DESA exhibits larger transient deviations. 
\par To further evaluate the regret performance, we define the worst-case regret across all agents as $\mathcal{R}_t=\max_{i\in [10]}\mathcal{R}_{i,t}$, and present its time-averaged value $\mathcal{R}_t/t$ in Fig. \ref{reg_fig}. Both algorithms exhibit decreasing regret curves, confirming their sublinear regret behavior. Nevertheless, DPFOA achieves substantially faster decay and attains a much lower regret level than DESA, demonstrating its superior efficacy and tracking accuracy.

\section{Conclusion}\label{conclusion}
This paper has developed a distributed projection-free bandit online feedback optimization algorithm for multi-agent dynamical systems with constrained inputs and time-varying cost functions. By utilizing a smoothing zeroth-order one-point estimator, the proposed algorithm constructs local gradient approximations directly from real-time input–output data, eliminating the need for any system model. A projection-free conditional gradient update, integrated with distributed global gradient estimators, is employed to effectively handle input constraints and maintain scalability in large-scale networked systems. The distributed online algorithm is shown to achieve a sublinear dynamic regret. Numerical simulations illustrate its effectiveness. Future work will explore communication-efficient implementations and extensions that incorporate additional safety or state constraints.

\bibliographystyle{IEEEtran}
\bibliography{refer}
\end{document}